\newcommand{\Q}{\mathcal Q}
\newtheorem{theorem}{Theorem}[section]
\newtheorem{lemma}[theorem]{Lemma}
\newtheorem*{coro}{Corollary} %%%% for unnumbered statements
\theoremstyle{definition}
\newtheorem{definition}[theorem]{Definition}
\newtheorem{example}[theorem]{Example}
\newtheorem{remark}[theorem]{Remark}
\newcommand{\Hom}{\operatorname{Hom}}
\title{Fermionic spectra in integrable models}
\author{Rinat Kedem}
\begin{document}

\begin{abstract}
  This is a brief review of several algebraic constructions related to
  generalized fermionic spectra, of the type which appear in integrable quantum
  spin chains and integrable quantum field theories. We 
  discuss the connection between fermionic formulas for the graded
  dimensions of the spaces of conformal blocks of WZW theories,
  quantum cluster algebras, discrete integrable noncommutative
  evolutions and difference equations.

\end{abstract}

\begin{classification}
Primary 81R10; Secondary 82B23, 05E10
\end{classification}

\begin{keywords}
Fermionic character formulas, Fusion products, discrete integrable systems
\end{keywords}

\maketitle

\section{Partition functions in statistical mechanics and conformal
  field theory}
In statistical mechanics, a fundamental object of interest is the
partition function, the sum over the space of configurations
$\mathcal C$ of the Boltzmann
weight $e^{-E/kT}$, where $E$ is the energy of a configuration:
$$
Z = \sum_{c\in \mathcal C} e^{-E(c)/kT}.
$$

If the lattice is two-dimensional, the standard test for integrability is
the existence of a commuting family of transfer matrices. For a system
with periodic boundary conditions, the partition function can be
written as the trace of the product of transfer matrices. These are
operators on the Hilbert space of a one-dimensional slice of the lattice,
which depend on a spectral parameter. The coefficients in expansion of
this operator as a series in the spectral parameter gives commuting
integrals of the motion, hence integrability.

The Hamiltonian associated with the one-dimensional system is one of
those integrals. For example, the six-vertex
model transfer matrix is associated with the $XXZ$ Heisenberg spin
chain Hamiltonian \cite{baxter}. 

The two-dimensional integrable lattice model may undergo a
second-order phase transition at certain critical points, in the
infinite-lattice limit.  At the critical point, the behavior of the
model may be described by an effective conformal field theory.  The
correspondence includes the identification of the critical exponents,
given by the conformal dimensions, and the specific heat, given by the
central charge of the family of Virasoro representations which make up
the Hilbert space of the quantum field theory. It was shown in
\cite{KM,KKMM} that the massless part of the spectrum -- that is,
order 1 excitations in the statistical model, and quasi-particles in
the quantum field theory -- are also related, and an identification
can be made via the partition functions.

For the lattice model the spectrum is computed from the Bethe ansatz.
The ``order one'', or massless, excitations, which contribute to the
conformal partition function, have a ``quasi-particle-like''
behavior. For small momenta, their energy is a linear function of the
momentum. We call this the linearized spectrum.

In conformal field theory, the chiral part of the partition function
is given by the specialized characters of certain (not necessarily
irreducible) Virasoro modules. The full partition function is a
modular invariant sesquilinear combination of these and includes both the
chiral and anti-chiral parts.

In the original work of the author and collaborators \cite{KM,KKMM},
it was shown that, starting from the Bethe ansatz,
linearizing the spectrum and considering only massless
excitations above the vacuum, the resulting partition
function is equal to the chiral part of the partition function in the
conformal field theory, given by Virasoro characters.

The spectrum obtained from the Bethe ansatz is invariably of 
fermionic nature. At the time when this work was done, few fermionic
constructions of Virasoro modules were known. For example, the
Feigin-Fuks construction of the most interesting Virasoro modules
involves a resolution of the Verma module using the singular vectors,
and is thus given by an inclusion-exclusion principle, or (in general) an
infinite alternating sum.

A fermionic construction is a basis of the representation given by the
action of skew-commuting operators on the vacuum. This gives rise to
fermionic statistics: Identical fermions cannot occupy the same point
in phase space. One type of generalization of fermionic statistics
will be given below. These rules are combinatorial and this is
reflected in the expression for the partition function.

There are various ways of constructing bases for any given Virasoro
module. The idea of fermionic constructions is that physically
meaningful ones reflect the spectrum away from criticality of the
integrable quantum field theory. The particle content is some
reflection of the form of the primary fields of the conformal field
theory. These fields are the generalized fermions. See \cite{JMS} for
a recent example of this.

This note is organized as follows. In Section 2 we will give a few
examples of fermionic partition functions related to WZW models. In
Section 3, we will relate the general fermionic formulas for graded
dimensions of the space of conformal blocks to cluster algebras and
quantum cluster algebras. In Section 4, we will show how the
integrability of the resulting discrete difference equations
(Q-systems and their quantized version) can be used to give difference
equations satisfied by generating functions for the graded dimensions
of the space of conformal blocks. These are variants of quantum
difference Toda equations. These dimensions are the dimensions of the
moduli space of holomorphic vector bundles on the sphere with
prescribed punctures, and their graded analogs.

\subsection{Acknowledgements} 
The author would like to express her gratitude to her advisor
B.M. McCoy, with whom the original formulation of the physical
interpretation of conformal partition functions and their fermionic
expressions was made; To M. Jimbo, T. Miwa for their patient mentorship
over many years; to them as well as B. Feigin and E. Ardonne,
with whom she first worked on fusion products; and most especially to
P. Di Francesco for an ongoing illuminating collaboration. The author
thanks S. Fomin, H. Nakajima, N. Reshetikhin for their kindness and for
helpful discussions related to this work over the years.  This work
has been supported by the National Science foundation through several
grants, most recently NSF DMS grant 1100929.

\vskip.2in

\section{Generalized fermionic formulas}

Let us be specific about what we mean by a generalized fermion and the
resulting fermionic formula for the partition function. This
phenomenon occurs in finite or infinite systems. The natural
finite-dimensional system to start from is a solvable model on the
finite lattice with a spectrum governed by the statistics of the Bethe ansatz
equations. The eigenstates of the Hamiltonian with a Bethe ansatz
solution are in bijection with solutions of a coupled set of algebraic
equations. The solutions are specified by a set of integers chosen
distinctly on certain finite intervals. We interpret a choice of one
integer as a quasi-particle, and a choice of $m$ integers as $m$
quasi-particles. The corresponding choice of integers is proportional
to their momentum, one of the conserved quantities. The fact that the
integers should be distinct is what gives them a fermionic
nature.

 The resulting combinatorics is as follows. We approximate the energy
of each quasi-particle as a linear function of the momentum (they are
massless) and hence the Bethe integers. This is a reasonable
assumption in the conformal, infinite-size limit.
Suppose the Hilbert space with $m$ quasi-particles has $m$ integers
chosen distinctly from the set $[1,p+m]$ for some integer $p\geq 0$. 
Let $q=e^{-\alpha}$, where $\alpha$ is the proportionality constant
between the energy and the Bethe integers.
Then the partition function of $m$ quasi-particles is
$$q^{m(m+1)/2}
{\left[ \begin{array}{c} p+m \\ m \end{array}\right]}_q $$
where the $q$-binomial coefficient is defined as
$$
\left[ \begin{array}{c} p +m \\ m \end{array}\right]_q =
\prod_{j=1}^m\frac{(1-q^{p+j})}{(1-q^j)}, \quad p\geq m,
$$
and is defined to be zero if $p< m$.
The partition function of fermions on the interval $[1,p+m]$ is
$$Z_p(q) = \sum_{m\geq 0} q^{m(m+1)/2} {\left[ \begin{array}{c} p+m \\
      m \end{array}\right]}_q.$$ 
In the limit $p\to \infty$, this formula becomes
$$
Z(q) = \sum_{m\geq 0} q^{m(m+1)/2}\frac{1}{\prod_{j=1}^m (1-q^j)}.
$$
The important characteristics to note are
\begin{enumerate}
\item There is a quadratic function of the particle number $m$ in the
  exponent. This is the ``ground state energy'' of a fermionic system
  with $m$ particles.
\item There is a $q$-binomial coefficient, or its $p\to\infty$ limit,
  which is just the weighted sum over configurations above the ground
  state of $m$ fermions.
\end{enumerate}
A slight generalization of fermionic statistics always occurs in the
Bethe ansatz solution: The integer $p$ is a linear function of $m$
itself, in addition to the external parameters of the system (such as
size). 

Moreover, there is in general more than one ``color'' of
quasi-particle, and these have available energy ranges for each color
separately.  Again, these are free fermions, except for the
generalized statistic which hides in the integers $p_i$ for each
color: Each $p_i$ is a function of the number $m_j$ of 
quasi-particles of type $j$ in the system.

Thus a fermionic formula for the (conformal, linearized version of
the) partition on the finite lattice might has the form 
\begin{equation}\label{fermionpartition}
Z(q) = {\sum_{\mathbf m}}^{(1)} q^{Q(\mathbf m)} \prod_i
\left[ \begin{array}{c} p_i +m_i\\ m_i \end{array}\right]_q. 
\end{equation}
Here, $\mathbf m\in \mathbb Z_+^k$ for some $k$, The ground state
energy $Q({\mathbf m})$ is a quadratic function of the particle
content $\mathbf m$ which depends on the model, as are $p_i$, which
are in general linear functions of $\mathbf m$, and may tend to
infinity as the size of the system becomes infinite. Here, the
superscript $(1)$ on the summation indicates possible restrictions on
the summation variables corresponding to symmetry sectors of the
Hamiltonian. A finite system will have only a finite number of terms
in the summation. Moreover there may be several different symmetry
sectors of the Hamiltonian, in which case the partition function can be
projected to the different sectors separately.

If the model has a conformal limit (the size of the system is
infinite while the spectrum remains linearized, that is, the system
remains critical), the partition function -- properly normalized and
restricted -- tends in the limit to the graded character of some
Virasoro module. That is, $Z(q)$ is proportional to the trace of
$q^{L_0}$ over the module, where $L_0$ is the grading element of the
Virasoro algebra. A conformal field theory is built out of such
modules. This gives a direct connection between the spectrum of the
lattice model and the conformal field theory in certain cases.

\vskip.2in

\subsection{The Fock space as a limit of the reduced wedge product}

It is well known that the basic representation of the affine algebra
$\widehat{\mathfrak sl}_n$ can be realized as a quotient of the Fock
space of free fermions by a Heisenberg algebra \cite{KacRaina}. It
is possible to give a finite-dimensional version of this construction
\cite{Ke04}. As it is closely connected to the graded tensor product
construction introduced below in Section
\ref{sec-fusion}, we briefly summarize it. This finite-dimensional
fermionic space gives -- in the inductive limit -- the Frenkel
Kac construction of the level-1 modules.

Let $V=V(\omega_1)\simeq \mathbb C^n$ be the defining representation of
$\mathfrak g={\mathfrak {sl}}_n$, and
$V(z)=V\otimes \mathbb C[z]$ a representation on which $\mathfrak{g}^-:=\mathfrak{g}\otimes
\mathbb C[t^{-1}]\subset \widehat{\mathfrak{sl}}_n$ acts
as $x\otimes f(t^{-1}) v = f(z) x v$ with $x\in \mathfrak{g}$, $f(t)\in \mathbb C[t]$, and $v\in
V(z)$.

Consider the $N$-fold tensor product 
$$V_N(z_1,...,z_N)=V(z_1)\otimes \cdots \otimes V(z_N)\simeq V^{\otimes N}\otimes
\mathbb C[z_1,...,z_N]$$
on which $\mathfrak{g}^-$ acts by the
usual co-product:
$$
\Delta_{{\mathbf z}}(x\otimes f(t)) = \sum_{i=1}^N x_{(i)} f(z_i^{-1})
$$
where $x_{(i)}$ indicates $x$ acting on the $i$th factor in the tensor
product. Obviously, this action commutes with the diagonal action of
the symmetric group $S_N$, simultaneously permuting factors in the
tensor product and variables $z_i$.

It also commutes with the action of
the negative part of the Heisenberg algebra $\mathcal H_-$, acting on
the space by multiplication by symmetric polynomials in
$z_1,...,z_N$. (Operators of the form ${\rm id} \otimes t^{-n}, n>0$). Thus,
we have three commuting actions. We quotient by the action of the
Heisenberg, and project onto the alternating representation of $S_N$,
and the result is called the reduced wedge space. It is a finite
dimensional space described explicitly as follows.

The quotient by the Heisenberg action is the quotient of
$\mathbb C[z_1,...,z_N]$ by symmetric polynomials of positive degree $I_N$. That is,
$$V_N[{\mathbf z}]/{\rm Im} \mathcal H^- = 
V^{\otimes N}\otimes \mathbb C[z_1,...,z_N]/I_N := V^{\otimes N}\otimes R_N$$

The space $R_N$ is isomorphic to the cohomology ring of the Flag
variety and to the regular representation of $S_N$. In particular,
it is finite-dimensional. It is a graded by the homogeneous
degree in $z_i$ and the action of the symmetric group preserves the
graded components. Thus,
$$
R_N \simeq \underset{\lambda\vdash N}{\oplus}W_\lambda \otimes
M_{\lambda,N},
$$
where $W_\lambda$ are the irreducible representations of $S_N$ and
$M_{\lambda,N}$ is a graded multiplicity space.  We also have the
decomposition
$$
V^{\otimes N} \underset{\mathfrak{g}\times S_N}{\simeq} \underset{\nu\vdash N,
  l(\nu)\leq n}{\oplus}V(\overline{\nu})\boxtimes W_\nu
$$
where $\overline{\nu}$ is the partition $\nu$ stripped of its columns
of length $n$, and $V(\lambda)$ are irreducible finite-dimensional
representations of $\mathfrak{g}$.

Taking the tensor product with $R_N$ and projecting onto the
alternating representation with respect to the diagonal action of
$S_N$, we identify $\nu = \lambda^t$. Thus, the reduced wedge space is
isomorphic to 
$$
\mathcal F_{N}\simeq \oplus V(\overline{\lambda}) \otimes M_{\lambda^t,N},
$$
where $\lambda^t$ is the transpose of $\lambda$. The hilbert
polynomial of $M_{\lambda^t,N}$ is a Kostka polynomial. In the limit
as $N\to\infty$, the properly normalized coefficient of $V(\lambda)$
is a character of the $W$-algebra, which is the centralizer of $\mathfrak{g}$
acting on the level-1 module of $\widehat{\mathfrak{g}}$, and the character of
$F_N$ tends to the character of the basic representation of the affine
algebra. Thus the reduced wedge product is a truncation of this
space, a Demazure module. 

We will give fermionic formulas for the generalizations of this Kostka
polynomial below.

\subsection{The Hilbert space of the generalized Heisenberg model}
We now give a very general setting which gives rise to fermionic
partition functions. The wedge space in the previous section is a
special case of this construction.

The fermionic formula of the type \eqref{fermionpartition} appears in
particular in the generalized Heisenberg spin chain with periodic
boundary conditions. This is a quantum spin chain, whose Hamiltonian
is derived via the $R$-matrix which intertwines tensor products of
Yangian modules $Y(\mathfrak g)$.  The simplest case of this is known
as the XXX spin chain, which was the subject of Bethe's original ansatz
\cite{Bethe}.

To define this spin chain, choose a the following data:
\begin{enumerate}
\item Any finite-dimensional Yangian module $V_0$. This is known as
  the auxiliary space.
\item A sequence of $N$ Yangian modules $\{V_1,...,V_N\}$ of KR-type
  (see below).
\end{enumerate}
The choice of non-isomorphic representations $V_i, i>0$
is the anisotropy of the model. 

Let $R_{ij}: V_i\otimes V_j\mapsto V_j\otimes V_i$ be the intertwiner
of finite-dimensional representations, known as (a rational)
$R$-matrix. For
generic spectral parameters, the tensor product is irreducicble and $R$
is unique, up to scalar multiple. The transfer matrix of the
generalized anisotropic Heisenberg model with periodic boundary
conditions is the trace over $V_0$ of the matrix
$M=R_{0,1}R_{0,2}\cdots R_{0,N}$. The transfer matrix $T_{V_0}$ is an
operator on the space $V_1 \otimes V_2\otimes \cdots \otimes V_N$,
which is the Hilbert space of the spin chain, also known as the
quantum space.

Since the $R$-matrix satisfies the Yang-Baxter equation, it follows
easily that the transfer matrices corresponding to different auxiliary
spaces commute. Expanding the transfer matrix as a series in in the
spectral parameter of $V_0$, each of the coefficients in the expansion
-- an element in an algebra acting on the Hilbert space -- commutes
with the other coefficients.  These coefficients therefore 
form a family of commuting integrals of motion. The spin chain is a
quantum integrable system. The quantum spin chain Hamiltonian is one
of the integrals.

This model has a Bethe ansatz solution, at least when the modules
$\{V_i\}$ are of Kirillov-Reshetikhin (KR)-type \cite{KR}
\cite{KR,Chari}. Such modules
are parameterized by a highest weight with respect to the Cartan
subalgebra of $\mathfrak g\subset Y(\mathfrak{g})$ and a spectral parameter. The
highest weight of a KR-module is a multiple of one of the fundamental
weights of $\mathfrak{g}$.  

The eigenvectors and eigenvalues of the Hamiltonian are given by
solutions of the Bethe equations. Solutions are parameterized in terms
of sets of distinct integers in the same manner described above. The
linearized spectrum is proportional to the sum of these integers. 

For this particular model, there is an arbitrary number of
quasi-particle species or ``colors'' for each root of the Lie algebra
$\mathfrak{g}$, which obey generalized fermionic statistics.
The statistics depends only on the Cartan matrix and the
highest weights of $\{V_i\}$. We will write down this function
explicitly, as it is key to the rest of the paper (we restrict our
attention here to simply-laced $\mathfrak{g}$ here for simplicity; The other
cases are explained in \cite{HKOTY,AK,DFK}).

The Hilbert space of the model is $V_1\otimes \cdots\otimes V_N$, so its
dimension is $\prod_i |V_i|$. (The ordering of these representations does
not effect the spectrum.) The partition function of the linearized
spectrum gives a graded version of this dimension, and we will provide a
representation theoretical interpretation of this grading.

Let $\lambda_1,...,\lambda_N$ be the highest weights of $V_1,...,V_N$
respectively. Each $\lambda_i$ is a multiple of one of the fundamental
weights, and therefore the choice of highest weights
is parameterized by a  multi-partition 
\begin{equation}\label{bnu}
\boldsymbol
\nu=(\nu^{(1)},\ldots,\nu^{(r)}), \quad \nu^{(a)}\vdash n_a,
\end{equation}
where the non-negative integers $\mathbf n=(n_1,...,n_r)$ are defined by
$\sum_{i=1}^N \lambda_i = \sum_{a=1}^r n_a \omega_a$ and $r$ is the
rank of the algebra. 

Define a set of integers ${\mathbf m}=(m_1,...,m_r)$ as follows:
$$
C\mathbf m = \mathbf n - \boldsymbol \ell, \quad \ell_a = \langle
\alpha_a,\lambda\rangle.
$$
for any choice of  a dominant integral weight $\lambda$ such that
${\mathbf m}\in \mathbb Z_+^r$. 
The evaluation of  the character ${\rm ch}_{\mathbf z} V(\lambda)$ of
the irreducible $\mathfrak{g}$-module $V(\lambda)$ at $\mathbf
z=(1,...,1)$ is the dimension of $V(\lambda)$.

\begin{theorem}
The linearized partition function $Z_{\boldsymbol \nu}(q)$ is the evaluation at
$\mathbf z=(1,...,1)$ of
$$
M_{\boldsymbol \nu}(q;\mathbf z) = \sum_\lambda M_{\boldsymbol \nu,\lambda}(q) {\rm ch}_{\mathbf z}(V_\lambda),
$$
where
\begin{equation}\label{pf}
M_{\boldsymbol \nu,\lambda}(q) = {\sum_{\boldsymbol\mu\vdash{\mathbf m}}} q^{Q({\boldsymbol \mu})} 
\prod_{a=1}^r\prod_{j\geq 1}\left[ \begin{array}{cc} p^{(a)}_{j}
    + \mu^{(a)}_{j}-\mu^{(a)}_{j+1} \\ \mu^{(a)}_{j}-\mu^{(a)}_{j+1} \end{array}\right]_q.
\end{equation}
The sum extends over all multipartitions $\mu^{(a)}$ of $m_a$, and
\begin{itemize}
\item The integers $p_{a,j}$ are the sum over the first $j$ rows of
  the integer sequence $\pi^{(a)} = \nu^{(a)} - \sum_{b} C_{a,b}\mu^{(b)}$;
\item The quadratic function in the exponent is
$$
Q(\boldsymbol\mu)=
\frac12
\sum_{a,b=1}^r
\sum_{i\geq 1} \mu_i^{(a)}C_{a,b}\mu_i^{(b)}.
$$
\end{itemize}
\end{theorem}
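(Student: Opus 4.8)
The plan is to derive the fermionic formula \eqref{pf} directly from the combinatorics of the Bethe ansatz solution, following the same logic as the single-species computation recalled above, and then to read off the $\mathfrak g$-character decomposition from the way the Bethe states assemble into $\mathfrak g$-irreducibles. First I would recall the string structure of the solutions. For each color $a\in\{1,\dots,r\}$ and each length $j\ge 1$, let $m^{(a)}_j$ be the number of $j$-strings of color $a$, and encode these counts in a partition $\mu^{(a)}$ by setting $m^{(a)}_j=\mu^{(a)}_j-\mu^{(a)}_{j+1}$, so that $\mu^{(a)}_j=\sum_{k\ge j}m^{(a)}_k$ and $\sum_{j\ge1}\mu^{(a)}_j=m_a$, which (by Abel summation) also equals the total number $\sum_j j\,m^{(a)}_j$ of Bethe roots of color $a$. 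This identifies the outer sum over $\boldsymbol\mu\vdash\mathbf m$ with the sum over all admissible string contents and explains the appearance of $\mu^{(a)}_j-\mu^{(a)}_{j+1}$ in the lower entry of each $q$-binomial.

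Next I would establish the vacancy numbers. Taking the logarithm of the Bethe equations and passing to the linearized (conformal) regime, the Bethe integers attached to the $j$-strings of color $a$ must be chosen distinctly from an interval of length $p^{(a)}_j+m^{(a)}_j$, where $p^{(a)}_j=\sum_{i=1}^j\pi^{(a)}_i$ is the $j$-th partial sum of $\pi^{(a)}=\nu^{(a)}-\sum_b C_{ab}\mu^{(b)}$. Since $\sum_{i=1}^j\mu^{(b)}_i=\sum_k\min(j,k)\,m^{(b)}_k$, this is exactly the familiar vacancy number in which the source $\nu^{(a)}$ records the Kirillov--Reshetikhin content of $V_1,\dots,V_N$ and $-\sum_b C_{ab}\mu^{(b)}$ is the fermionic back-reaction of the other quasiparticles; this is the generalized statistics, and the Cartan matrix $C$ enters the whole construction precisely through it.

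With these interval lengths in hand, I would sum $q$ to the linearized energy over all configurations. Following the two characteristics noted above, the energy decomposes as a ground-state contribution, quadratic in the string content, plus the excitation energy carried by the Bethe integers above their minimal values. The excitation sum factorizes over $(a,j)$, each factor contributing the Gaussian binomial $\left[\begin{array}{c} p^{(a)}_j+m^{(a)}_j\\ m^{(a)}_j\end{array}\right]_q$ exactly as in the one-species case, while the ground-state contribution assembles, via the identity $\sum_i\mu^{(a)}_i\mu^{(b)}_i=\sum_{k,l}\min(k,l)\,m^{(a)}_k m^{(b)}_l$, into the single compact exponent $q^{Q(\boldsymbol\mu)}$ with $Q(\boldsymbol\mu)=\tfrac12\sum_{a,b}\sum_i\mu^{(a)}_iC_{ab}\mu^{(b)}_i$. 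To obtain the $\mathbf z$-refinement, note that a fixed string content produces highest-weight Bethe vectors of a definite weight, with the residual dominant weight $\lambda$ read off from $\mathbf n-C\mathbf m=\boldsymbol\ell$, i.e.\ $\ell_a=\langle\alpha_a,\lambda\rangle$; acting with the lowering operators fills out the full module $V_\lambda$, which is the representation-theoretic origin of the outer sum $\sum_\lambda M_{\boldsymbol\nu,\lambda}(q)\,{\rm ch}_{\mathbf z}(V_\lambda)$. Setting $\mathbf z=(1,\dots,1)$ replaces each character by $\dim V_\lambda$ and recovers the graded total dimension, which is $Z_{\boldsymbol\nu}(q)$.

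The main obstacle is not the bookkeeping above but the \emph{exactness} of the counting: one must know that the rigged configurations enumerated by \eqref{pf} are in bijection with the complete set of Bethe states and account for the entire Hilbert space $V_1\otimes\cdots\otimes V_N$. This completeness --- equivalently, the statement that the $M_{\boldsymbol\nu,\lambda}(q)$ really are the graded multiplicities of $V_\lambda$ in the fusion product of the $V_i$ --- is the genuinely hard input. I would supply it not by elementary manipulation of the Bethe equations but through the $Q$-system and the Kirillov--Reshetikhin combinatorial bijection, together with the representation theory of the fusion product, as developed in \cite{HKOTY,AK,DFK}.
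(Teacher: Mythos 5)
Your proposal is correct and follows essentially the same route as the paper, which offers no self-contained proof but instead sketches exactly this Bethe-ansatz quasi-particle counting (colors indexed by roots, vacancy numbers built from the Cartan matrix, quadratic ground-state energy, $q$-binomial excitation sums) and defers the hard completeness statement to the combinatorial Kirillov--Reshetikhin conjecture proved in \cite{HKOTY,AK,DFK}. Your identification of completeness of the Bethe/rigged-configuration counting as the genuine content, rather than the bookkeeping, matches the paper's own division of labor between this theorem and the one that follows it.
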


\begin{theorem}[Combinatorial Kirillov-Reshetikhin conjecture, \cite{DFK}] 
The sets of Bethe ansatz integers correctly count the dimension of the
Hilbert space of the
anisotropic Heisenberg model.
\end{theorem}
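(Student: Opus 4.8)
The plan is to recast the statement as a character identity and reduce it to a computation of the graded character of a fusion product. The assertion is precisely that $Z_{\boldsymbol\nu}(1)=\prod_i|V_i|=\dim(V_1\otimes\cdots\otimes V_N)$. By the previous theorem, $Z_{\boldsymbol\nu}(q)$ is $M_{\boldsymbol\nu}(q;\mathbf z)$ evaluated at $\mathbf z=(1,\ldots,1)$, and since ${\rm ch}_{\mathbf z}(V(\lambda))$ at $\mathbf z=(1,\ldots,1)$ is $\dim V(\lambda)$, the theorem is equivalent to the refined claim that $M_{\boldsymbol\nu,\lambda}(1)$ equals the multiplicity of $V(\lambda)$ in $V_1\otimes\cdots\otimes V_N$ as a $\mathfrak g$-module. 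I would therefore aim to prove the stronger character identity
$$M_{\boldsymbol\nu}(q;\mathbf z)\big|_{q=1}=\prod_{i=1}^N{\rm ch}_{\mathbf z}(V_i),$$
from which the dimension count follows by the two specializations above.

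To produce representations realizing the two sides, I would invoke the graded fusion product (Feigin--Loktev graded tensor product) $V_1\ast\cdots\ast V_N$ of the KR modules, a cyclic module over the current algebra $\mathfrak g[t]$ graded by $t$-degree. The key property is that the ungraded restriction to $\mathfrak g$ is isomorphic to the ordinary tensor product, so its $\mathbf z$-graded character specialized at $q=1$ is automatically $\prod_i{\rm ch}_{\mathbf z}(V_i)$. This reduces the theorem to the graded statement that the graded character of the fusion product equals the fermionic formula $M_{\boldsymbol\nu}(q;\mathbf z)$ of \eqref{pf}; setting $q=1$ then yields the desired identity.

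To establish this graded identity I would use the Q-system. Writing $Q_{a,k}(\mathbf z)$ for the character of the KR module of highest weight $k\omega_a$, these satisfy the simply-laced recursion $Q_{a,k+1}Q_{a,k-1}=Q_{a,k}^2-\prod_{b\sim a}Q_{b,k}$, so the tensor-product character $\prod_a\prod_j Q_{a,\nu^{(a)}_j}$ is assembled from solutions of this discrete evolution. I would then show that the fermionic sum \eqref{pf}, regarded as a function of the data $\boldsymbol\nu$, satisfies the same recursion: this is the mechanism advertised in Section 4, where integrability of the Q-system produces difference equations for the generating function of the graded multiplicities $M_{\boldsymbol\nu,\lambda}$. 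Matching initial data at single KR modules, where both sides reduce to the known KR character, then pins the two sides down as the same solution for all $\boldsymbol\nu$.

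The hard part will be verifying that the fermionic formula satisfies the Q-system recursion. This is a nontrivial $q$-binomial identity: one must track how the linear forms $p^{(a)}_j$ and the quadratic form $Q(\boldsymbol\mu)$ transform under the shift of $\boldsymbol\nu$ implementing the recursion, and check that the resulting rearrangement of the sum over $\boldsymbol\mu$ reproduces the quadratic Q-system relation. The alternative bijective route realizes \eqref{pf} at $q=1$ as a count of rigged configurations and builds the Kerov--Kirillov--Reshetikhin bijection to crystal paths; there the obstacle simply migrates to constructing that bijection uniformly beyond type $A$, which is historically where the difficulty lies. Either way, the essential point to be secured is that the fermionic partition function is the distinguished solution of the (quantized) Q-system singled out by its boundary data.
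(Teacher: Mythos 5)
Your reduction of the theorem to the character identity $M_{\boldsymbol\nu}(1;\mathbf z)=\prod_{i}{\rm ch}_{\mathbf z}(V_i)$ is correct, but both of your subsequent moves run into trouble. The detour through the Feigin--Loktev fusion product buys nothing and risks circularity: the ungraded statement needs no fusion product at all, while the graded statement you propose to prove instead (graded character of the fusion product equals \eqref{pf}) is Theorem \ref{qdim}, which is strictly \emph{stronger} than the theorem in question and whose proof in \cite{DFKfusion} relies on the quantum cluster algebra machinery; likewise Theorem \ref{dim} is proved in \cite{AK} by \emph{using} the combinatorial KR conjecture together with the theorem of \cite{HKOTY} and the $M=N$ theorem of \cite{DFK}, so it cannot be assumed here. (Also, the difference equations of Section 4 are a downstream application of the $Q$-system's integrability, not the mechanism by which the dimension count is proved.)

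The essential gap, however, is in your final step. The induction-on-$\boldsymbol\nu$ strategy you describe --- showing the fermionic sum is compatible with the $Q$-system recursion \eqref{qsys} and matching initial data --- is exactly the strategy of \cite{HKOTY}, and it succeeds only for the $N$-sum of Remark \ref{nsum}, in which the $q$-binomial coefficients are analytically continued to negative values of $p_{a,i}$; the $q$-binomial identities needed to push the recursion through hold uniformly only for the continued binomials. The sum \eqref{pf} that actually counts Bethe ansatz integers is the $M$-sum, in which terms with $p_{a,i}<0$ are simply absent. The equality $M_{\boldsymbol\nu,\lambda}(1)=N_{\boldsymbol\nu,\lambda}(1)$ is precisely the ``highly non-trivial'' step flagged in Remark \ref{nsum}, and it does not follow from any rearrangement of the recursion: in the route this paper describes, following \cite{DFK}, one first writes $N_{\boldsymbol\nu,\lambda}(1)$ as a constant term in solutions of the $Q$-system, and then uses the polynomiality property (Theorem \ref{poly}, a consequence of the cluster-algebra Laurent phenomenon applied to the seed \eqref{seed}) to show that the terms with some $p_{a,i}<0$ contribute nothing to that constant term (Theorem \ref{nism}). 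Without an argument of this kind, your proposal proves at best the statement with \eqref{pf} replaced by the $N$-sum --- which was already known to \cite{HKOTY} --- and leaves the actual theorem open.
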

That is, when evaluated at $q=1$, Equation \eqref{pf} gives
an expression for the dimension of the space of $\mathfrak{g}$-linear
homomorphisms from the tensor product of KR-modules to the irreducible
representation $V(\lambda)$. This was known as the Kirillov-Reshetikhin
conjecture \cite{HKOTY}.

\begin{remark}\label{nsum}
The sum in \eqref{pf} is known as the ``$M$-sum'' in the language of
\cite{HKOTY}. There is a similar sum called the ``$N$-sum'', where the
definition of the $q$-binomial coefficient is continued to values of
$p<0$ by
$$
\left[\begin{array}{c} p+m \\ m
\end{array}
\right]_q =
\frac{(q^{p+1};q)_\infty(q^{m+1};q)_\infty}{(q;q)_\infty(q^{p+m+1};q)_\infty},
\qquad
(a;q)_\infty = \prod_{i\geq 0} (1-a q^i).
$$
The fact that $N(q)=M(q)$ is highly non-trivial; it was first
conjectured by \cite{HKOTY}, who showed that the $N$-sum gave the
correct dimension of the tensor product. It was later proven in 
\cite{DFK,DFKfusion} and shown to be closely tied with the Laurent property
\cite{FZ} of the quantum cluster algebra \cite{BZ} associated with 
the $Q$-system, defined below. 
\end{remark}

\begin{remark}
The sum \eqref{pf} is a generating function for certain Betti numbers
of quiver varieties in special cases \cite{Lusztig}, see also more recent
work giving a geometric context \cite{KodNaoi}.
\end{remark}

%\begin{remark}
%In the context of the completeness of the Bethe ansatz solutions of
%the Heisenberg model, the  
%\end{remark}

\subsection{Space of conformal blocks in WZW theory}
The formula for the linearized partition function of the Heisenberg
spin chain is of interest for several reasons. 

First, it is
known that, in special stabilized infinite limits, its conformal limit
is the Wess-Zumino-Witten model at a level which depends on the
representations $V_i$.

\begin{example} Let $\mathfrak{g}=\mathfrak{sl}_2$, set $V_i=
  V(k\omega_1)$ for all $i=1,...,N$, and consider the limit as the
  number of representations, $N= 2M$ becomes infinite.  Then the limit
  $M\to\infty$ of the normalized partition function
  $\lim_{M\to\infty}\widetilde{Z}_{2 M}(q;\mathbf z)$ is the
  character of the level-$k$ module of the affine Lie algebra
  $\widehat{\mathfrak sl}_2$ with highest weight $k \Lambda_0$.
\end{example}

\begin{example} Let $\mathfrak{g}=\mathfrak{sl}_n$ and $V_i=V(\omega_1)\simeq \mathbb C^n$. 
Then in the limit $N\to\infty$ the normalized, linearized partition function
\eqref{pf} is a Kostka polynomial \cite{Ke04}. In the conformal limit,
this gives a character of the $W_n$-algebra which centralizes
the action of $\mathfrak{g}$ when acting on the level-1 modules.
\end{example}

Another important role of the linearized partition function of the Heisenberg
model  is that it gives the dimension (at $q=1$) of
the space of conformal blocks of WZW theory (when $k\gg 1$ is an
integer). This is the dimension of the moduli space of holomorphic
vector bundles on a Riemann surface with $N$ punctures, with
specified monodromy given by the representations $V_i$, which are
taken to be arbitrary $\widehat{\mathfrak{g}}$-modules induced from KR-modules, localized
at distinct points. It is also known as the space of coinvariants.

\begin{remark}
  The reason we take $k$ to be integer is that the integrality
  property of the representations is used in the proof of the
  statement. The reason we require $k\gg 1$ is that for finite $k$,
  one has the Verlinde coefficients rather than the Littlewood
  Richardson coefficients
  for multiplicities of the irreducibles in the tensor product of 
  integrable modules affine algebra modules. We did not take this into
  account in \eqref{pf}.
 A separate conjecture for the fermionic formula of
  the linearized partition function of this space can be found in
  \cite{FL}. If $k$ is sufficiently large, the multiplicity is just
  as a sum of products of Littlewood Richardson coefficients or
  their generalization.
\end{remark}

We have a graded version of the dimension of the moduli space, meaning
we keep track of a certain grading or a refinement of the space. It is
known that, in special cases, this corresponds to keeping track of the
Betti numbers for a certain quiver variety, giving a geometric meaning
to the graded dimensions.

%\subsubsection{Relation to quiver representations}
%look at Feigin, Reineke paper to extract
\subsection{A grading on the tensor product}\label{sec-fusion}

It is known that the Hilbert space of the Heisenberg model, together
with the linearized spectrum of the Hamiltonian, in the limit when the
number of representations $V_i$ becomes infinite (taking all
$V_i\simeq V$, the defining representation, for example), gives the
characters of affine algebras in the limit as the (chiral) conformal
partition function. The relevant conformal field theory is the WZW
model at level 1. 

\begin{remark} There is also an explicit construction of this
  infinite-dimensional Hilbert space for the XXZ model, using the
  quantum affine algebra, using a stabilized semi-infinite tensor
  product \cite{DFJMN}. In this case it is possible to construct the
  transfer matrix in terms of intertwining operators which gives a
  direct connection with the deformed primary fields of the conformal
  field theory.
\end{remark}

Moreover, we identify the {\em dimension} of the Hilbert space of the
finite, inhomogeneous Heisenberg model with dimension of the space of
conformal blocks (for level $k$ sufficiently large).

These two facts form the motivation for the following definition of a
graded tensor product \cite{FL}. Whereas there are other definitions
of an ``energy function'' on the tensor product which defines a
grading on the tensor product in the case of quantum affine algebras
(these correspond to the XXZ model, or the limit $q\to 0$ in the case
of the crystal basis), the definition here refers only to the
undeformed current algebra.

\begin{remark}
  KR-modules are defined for three algebras: For the quantum affine
  algebra $U_q(\widehat{\mathfrak g})$, the Yangian $Y(\mathfrak{g})$, and the
  current algebra $\widehat{\mathfrak{g}}$. \cite{KR,Chari}. One of the
  consequences of the theorems of \cite{AK,DFK} is that these all have
  the same structure under restriction to the underlying finite
  dimensional algebra, $\mathfrak{g}$ or $U_q(\mathfrak{g})$ \cite{Ke11}. Here we use only
  the current algebra version.
\end{remark}

\begin{definition}
Let $V$ be a cyclic $\mathfrak{g}[t]=\mathfrak{g}\otimes \mathbb C[t]$-module, 
defined by the representation $\pi$.
We define the representation $\pi_\zeta$ on $V$ as follows. Given
$x\otimes f(t)\in \mathfrak{g}[t]$ and $w\in V$, $\pi_\zeta(x\otimes f(t)) w =
\pi(x\otimes f(t+\zeta)) w$, for some $\zeta\in \mathbb C^*$.
\end{definition}
That is, the localization takes place at $\zeta$. We use the shorthand
$V(\zeta)$ for the module with the action $\pi_\zeta$, even though the
vector space itself is simply $V$.

Now pick $V_i(\zeta_i)$ to be KR-modules of $\mathfrak{g}[t]$, with $1\leq i
\leq N$, with $\zeta_i\neq \zeta_j$ for all $i\neq j$. Let $v_i$ be
the cyclic, highest weight vector of $V_i$.
We have
$V_i(\zeta_i) = U(\mathfrak{g}[t]) v_i$, and the tensor product is also cyclic
(as long as the localization parameters are distinct):
$$
V_1(\zeta_1)\otimes \cdots \otimes V_N(\zeta_N) = U(\mathfrak{g}[t]) v_1\otimes
\cdots \otimes v_N.
$$
(The assumption that we have KR-modules is not essential at this point,
only that each of the modules $V_i$ is cyclic.)

The algebra $\mathfrak{g}\otimes \mathbb C[t]$ is graded by degree in $t$, and
so is its universal enveloping algebra. Let $U_i$ denote the graded
component. The action of $U_i$ on the tensor product of cyclic vectors
inherits this
filtration, and therefore we have a filtration of the tensor product
itself. The associated graded space of this filtered space is called
the Feigin-Loktev ``fusion'' product, $\mathcal F^*_{V_1,...,V_N}$.

\begin{theorem}[\cite{AK,DFK,Ke11}]\label{dim}
The associated graded space is
isomorphic to the tensor product of KR-modules as a $\mathfrak{g}$-module. That
is, it is independent of the localization parameters $\zeta_i$.
\end{theorem}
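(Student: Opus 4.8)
The plan is to exploit the fact that the grading used to define the fusion product comes from the $t$-degree on $\mathfrak{g}[t]$, whereas the acting subalgebra $\mathfrak{g}=\mathfrak{g}\otimes t^0$ lies entirely in degree zero. This reduces the statement, at the level of $\mathfrak{g}$-modules, to complete reducibility of finite-dimensional representations of the semisimple Lie algebra $\mathfrak{g}$, so that passing to the associated graded does not alter the $\mathfrak{g}$-module structure.

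First I would record that the filtration defining the fusion product is a filtration by $\mathfrak{g}$-submodules. Writing $M=V_1(\zeta_1)\otimes\cdots\otimes V_N(\zeta_N)$ and $v=v_1\otimes\cdots\otimes v_N$, cyclicity gives $M=U(\mathfrak{g}[t])v$, and the filtration piece $F_d=U_{\le d}(\mathfrak{g}[t])\,v$ is spanned by the action of monomials of total $t$-degree at most $d$. Since every element of $\mathfrak{g}\otimes 1$ has $t$-degree zero, left multiplication by $\mathfrak{g}$ preserves $U_{\le d}(\mathfrak{g}[t])$, hence $\mathfrak{g}\cdot F_d\subseteq F_d$. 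Thus each $F_d$ is a $\mathfrak{g}$-submodule and $\operatorname{gr}M=\bigoplus_d F_d/F_{d-1}$ is a direct sum of $\mathfrak{g}$-modules, each of which carries the $\mathfrak{g}$-action induced from $\mathfrak{g}\subseteq\mathfrak{g}[t]$.

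Next I would invoke Weyl's theorem: each short exact sequence $0\to F_{d-1}\to F_d\to F_d/F_{d-1}\to 0$ of finite-dimensional $\mathfrak{g}$-modules splits, so inductively $M\cong\bigoplus_d F_d/F_{d-1}=\operatorname{gr}M$ as $\mathfrak{g}$-modules; by definition $\operatorname{gr}M=\mathcal F^*_{V_1,\dots,V_N}$. It then remains to identify $M$ as a $\mathfrak{g}$-module independently of the parameters. By definition of the localization, $\pi_{\zeta}(x\otimes 1)=\pi(x\otimes 1)$ for $x\in\mathfrak{g}$, so $V_i(\zeta_i)\cong V_i$ as $\mathfrak{g}$-modules and the $\mathfrak{g}\otimes 1$ action on $M$ is the ordinary coproduct action on $V_1\otimes\cdots\otimes V_N$, with no dependence on the $\zeta_i$. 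Combining, $\mathcal F^*_{V_1,\dots,V_N}\cong V_1\otimes\cdots\otimes V_N$ as $\mathfrak{g}$-modules, manifestly independent of the localization parameters.

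The single input this argument genuinely requires is the cyclicity of the tensor product under $U(\mathfrak{g}[t])$, which is precisely where the hypothesis $\zeta_i\neq\zeta_j$ enters; without it the filtration need not exhaust $M$, and $\operatorname{gr}M$ could have dimension strictly less than $\prod_i|V_i|$. I therefore expect the main work to lie in verifying cyclicity, using that each $V_i$ is generated over $\mathfrak{g}[t]$ by its highest weight vector together with the separation of the evaluation points. I would also stress that this argument controls only the \emph{ungraded} $\mathfrak{g}$-module structure; the substantially harder assertion---the one actually established in \cite{AK,DFK,Ke11} and needed for the fermionic formula \eqref{pf} to be well defined---is that even the \emph{graded} $\mathfrak{g}$-character of $\operatorname{gr}M$ is independent of the $\zeta_i$. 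This does not follow from complete reducibility and requires the specific structure of KR-modules, for instance a presentation by generators and relations or a direct match with the $M$-sum.
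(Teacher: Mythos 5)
Your argument is correct for the theorem as literally stated, but it takes a genuinely different and much softer route than the proof the paper describes, and it establishes less than what the cited works actually prove. Granting cyclicity of $V_1(\zeta_1)\otimes\cdots\otimes V_N(\zeta_N)$ --- which the paper asserts just before the theorem, and which (as the paper notes) does not require KR-type --- your two observations, namely that each $F_d$ is a $\mathfrak{g}$-submodule because $\mathfrak{g}\otimes 1$ sits in $t$-degree zero, and that Weyl's complete reducibility splits the filtration, do yield $\mathcal F^*_{V_1,\dots,V_N}\cong V_1\otimes\cdots\otimes V_N$ as ungraded $\mathfrak{g}$-modules, independently of the $\zeta_i$. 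The paper's proof is entirely different: it goes through the functional realization of the space of conformal blocks (coinvariants), the fermionic formula for its graded dimension, a theorem of \cite{HKOTY}, and the $M=N$ identity at $q=1$ of \cite{DFK}, and it uses the KR-type hypothesis essentially. The reason for that machinery is that the content actually established in \cite{AK,DFK,Ke11}, and relied on downstream (Theorem \ref{qdim} and the constant-term identities), is the stronger assertion that the \emph{grading} itself --- the Hilbert polynomial of $\mathcal F^*$ --- is independent of the $\zeta_i$ and equals the $M$-sum \eqref{pf}; this is also why the paper warns that $\mathcal F^*$, presented as a quotient (generators-and-relations) space, could a priori have dimension \emph{greater} than $\prod_i \dim V_i$, so that a fermionic upper bound must be matched against the tensor-product dimension via $M=N$ at $q=1$. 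Your complete-reducibility argument is structurally blind to the grading, as you correctly say in your closing paragraph; what it buys is an elementary and completely general proof (any cyclic modules, no KR hypothesis) of the weak, ungraded form. Your one miscalibration is the expectation that the main work lies in verifying cyclicity: that step is comparatively routine for modules localized at pairwise distinct points, and the genuine difficulty in the cited works is precisely the graded independence that your argument sets aside.
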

For the proof of this theorem, it is essential that $V_i$ are of
KR-type. The graded $\mathcal F^*_{V_1,...,V_N}$ is defined as a
quotient space, so in general, its dimension may be greater than the
dimension of the tensor product itself. It corresponds to the
``collision'' of all the points $\zeta_i$. 

The theorem about the dimension of this space was proven using a
function space realization for the space of conformal blocks
(coinvariants), and the use of the Kirillov-Reshetikhin conjecture
about the explicit fermionic formula for the dimension of this space
\cite{AK}. The final step in this proof uses a theorem of \cite{HKOTY}
and the proof of the ``$M=N$'' conjecture at $q=1$ in \cite{DFK}.

Let ${\boldsymbol \nu}$ be the parameterization of the collection of
the KR-modules as in \eqref{bnu}, and let
Let $\mathcal F^*_{\boldsymbol \nu,\lambda}[n]=\Hom_{\mathfrak{g}}(\mathcal
F_{\boldsymbol\nu}^*[n],V(\lambda))$. and consider the Hilbert polynomial
$\sum_{n\geq 0} q^n \dim \mathcal F^*_{\boldsymbol\nu,\lambda}[n]$.

The following strong version of Theorem \ref{dim} is proven in
\cite{DFKfusion}: 
\begin{theorem}\label{qdim}[\cite{DFKfusion}]
The Hilbert polynomial of the Feigin-Loktev graded tensor product is
equal to the conformal partition function \eqref{pf}.
\end{theorem}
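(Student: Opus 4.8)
The plan is to prove the graded identity by squeezing: establish an upper bound for the Hilbert polynomial and then promote it to an equality using the ungraded result already in hand. Write $H_{\boldsymbol\nu,\lambda}(q)=\sum_{n\geq 0}q^n\dim\mathcal F^*_{\boldsymbol\nu,\lambda}[n]$, and note that both $H_{\boldsymbol\nu,\lambda}(q)$ and the fermionic formula $M_{\boldsymbol\nu,\lambda}(q)$ of \eqref{pf} are polynomials in $q$ with non-negative coefficients. The first observation is that Theorem \ref{dim}, together with the $q=1$ case of the Kirillov-Reshetikhin conjecture from \cite{DFK}, already gives $H_{\boldsymbol\nu,\lambda}(1)=M_{\boldsymbol\nu,\lambda}(1)$: the left side is the multiplicity of $V(\lambda)$ in the tensor product of the KR-modules, and the right side is the $M$-sum evaluated at $q=1$. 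It therefore suffices to prove the coefficient-wise inequality $H_{\boldsymbol\nu,\lambda}(q)\leq M_{\boldsymbol\nu,\lambda}(q)$, since if $M_{\boldsymbol\nu,\lambda}-H_{\boldsymbol\nu,\lambda}$ has non-negative coefficients and vanishes at $q=1$ then it is identically zero.

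To obtain the upper bound I would construct an explicit homogeneous spanning set of $\mathcal F^*_{\boldsymbol\nu}$ matching the terms of the $M$-sum. The fusion product is a cyclic $\mathfrak{g}[t]$-module generated by the tensor product of highest-weight vectors and graded by degree in $t$, so ordered monomials in the generators $x^-_\alpha\otimes t^k$ span it. The defining relations of the KR-modules, propagated through the $N$-fold coproduct and the localized current-algebra action, impose the Garland-type relations; these simultaneously bound the number of admissible generators of each color $a$ and cap their degrees. The aim is to show that, after organizing the spanning set to respect the $\mathfrak{g}$-isotypic structure, the contribution to the weight-$\lambda$ multiplicity is indexed by multipartitions $\boldsymbol\mu\vdash\mathbf m$ (where $\mathbf m$ is fixed by $\lambda$ through $C\mathbf m=\mathbf n-\boldsymbol\ell$), with a configuration of degree $Q(\boldsymbol\mu)$ shifted by the degrees counted by the $q$-binomial coefficient indexed by $(a,j)$ in \eqref{pf}; here $p^{(a)}_j$ must come out as the partial sums of $\pi^{(a)}=\nu^{(a)}-\sum_b C_{a,b}\mu^{(b)}$. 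This would give $H_{\boldsymbol\nu,\lambda}(q)\leq M_{\boldsymbol\nu,\lambda}(q)$.

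A cleaner way to organize the bound is by a recursion rather than an all-at-once spanning argument. On the combinatorial side, the $M$-sum satisfies a recursion assembled from the $q$-Pascal identity for the $q$-binomials together with the dependence $\pi^{(a)}=\nu^{(a)}-\sum_b C_{a,b}\mu^{(b)}$, which records how the $p^{(a)}_j$ shift when the node data $\boldsymbol\nu$ is incremented. On the representation-theoretic side one sets up short exact sequences of graded $\mathfrak{g}[t]$-modules, with controlled degree shifts, relating $\mathcal F^*_{\boldsymbol\nu}$ to fusion products with smaller $\boldsymbol\nu$; these are the graded lifts of the $Q$-system exact sequences used in \cite{DFK}, and they produce exactly the combinatorial recursion as a coefficient-wise inequality. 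Induction on the total size of $\boldsymbol\nu$, with the single-module base case, then delivers the bound. The integrality and positivity underlying this are controlled by the Laurent phenomenon for the quantum $Q$-system cluster algebra (Remark \ref{nsum}), which is also what upgrades the $M=N$ identity to the graded level.

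The main obstacle is precisely the upper-bound step. Because the fusion product is the associated graded of a quotient, producing a spanning set of exactly the right graded size is delicate: one must show that the Garland-type relations among the currents $x^-_\alpha\otimes t^k$, after passing through the coproduct and localization, cut the naive PBW spanning set down to the $q$-binomial count and no further, controlling both the number and the maximal degree of admissible generators of each color so that neither undercounting (which would contradict the $q=1$ equality) nor overcounting (which would break the inequality) occurs. A secondary subtlety is reading off multiplicities rather than weight multiplicities, so the spanning set must be organized to isolate the $\mathfrak{g}$-highest-weight vectors of each weight $\lambda$. Both points are exactly where the KR-structure of the modules $V_i$ is indispensable, as already flagged after Theorem \ref{dim}.
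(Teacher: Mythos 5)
Your overall skeleton (a squeeze: coefficient-wise inequality plus equality at $q=1$) is sound, and your $q=1$ inputs are correctly identified: Theorem \ref{dim} gives $H_{\boldsymbol\nu,\lambda}(1)=\dim\Hom_{\mathfrak{g}}(V_1\otimes\cdots\otimes V_N,V(\lambda))$, and the combinatorial Kirillov--Reshetikhin conjecture gives the same value for $M_{\boldsymbol\nu,\lambda}(1)$. The gap is the step you yourself flag as the main obstacle: the coefficient-wise bound $H_{\boldsymbol\nu,\lambda}(q)\leq M_{\boldsymbol\nu,\lambda}(q)$, where the bound must be by the \emph{$M$-sum}, i.e.\ the sum \eqref{pf} in which every $q$-binomial with $p^{(a)}_j<0$ is set to zero. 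This is not a routine spanning-set computation left as an exercise; it is the crux of the whole subject. Gordon-type filtrations, Garland relations, and especially the recursion you propose (incrementing $\boldsymbol\nu$, tracking the shifts of the $p^{(a)}_j$, matching against $q$-Pascal) are exactly the arguments of \cite{HKOTY}, and what such arguments naturally produce is the \emph{$N$-sum} of Remark \ref{nsum}, whose binomials are continued to $p<0$ by the infinite-product formula and need not vanish there. The truncated $M$-sum is not stable under the shifts of summation variables used to establish these recursions, so asserting that ``the $M$-sum satisfies a recursion assembled from the $q$-Pascal identity'' and that exact sequences ``produce exactly the combinatorial recursion'' silently assumes that the $p^{(a)}_j<0$ configurations contribute nothing --- which is precisely the graded $M=N$ identity, the statement the paper calls highly non-trivial. (A secondary inaccuracy: there are no module-theoretic ``$Q$-system exact sequences'' in \cite{DFK}; that paper is purely combinatorial/cluster-algebraic, so the graded lifts you propose to induct on are themselves an unproven construction.)

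What your plan would honestly deliver is $H_{\boldsymbol\nu,\lambda}(q)\leq N_{\boldsymbol\nu,\lambda}(q)$, hence $H=N$ after the $q=1$ squeeze; the remaining identity $N(q)=M(q)$ is the theorem of \cite{DFKfusion}, and the paper's route to Theorem \ref{qdim} runs entirely through it. One rewrites $M_{\boldsymbol\nu,\lambda}(q)$ as a constant term of an ordered product of solutions of the quantum $Q$-system (Theorem \ref{constantterm}), and the quantum Laurent phenomenon then forces polynomiality of the cluster variables after evaluation at $\Q_0^{(a)}=1$, which is what kills the $p^{(a)}_j<0$ contributions and yields the graded $M=N$ theorem. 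You mention the Laurent phenomenon only in passing, as ``controlling integrality and positivity,'' but in the paper's proof it is not bookkeeping: it is the single step that converts an $N$-type bound into the $M$-type statement of the theorem, and it has no known replacement by Garland relations or spanning-set combinatorics. Indeed, if such a direct argument existed, Theorem \ref{qdim} would already have followed from \cite{AK} together with the $q=1$ result of \cite{DFK}, and the quantum cluster algebra machinery of Section 3 would be superfluous.
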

We will introduce an expression for the partition function \eqref{pf}
as a constant term in the product of solutions of the
$Q$-system, a discrete recursion relation, in the next section. At the
same time, we will identify the $Q$-system as a mutation in a cluster
algebra, which therefore has a natural $q$-deformation. The proof of
Theorem \ref{qdim} will uses the methods of \cite{DFK} applied to this
quantum cluster algebra.

\begin{remark}In special cases, the FL graded tensor product is an
  affine Demazure 
  module \cite{FoLi}, which has a grading by the Cartan element $d$ of
  the affine algebra. This grading is essentially the same as the
  FL-grading. Therefore we are guaranteed that the appropriate
  semi-infinite graded tensor product is the full affine algebra module.
\end{remark}

By definition \cite{AK}, the idea of an associated graded space is
equivalent to taking all the spectral parameters $\zeta_i\to 0$. The
sum over the multipartitions $\mu$ in equation \eqref{pf} can be
viewed as a sum over all possible desingularizations of this
degeneracy (this is evident from the derivation using functional space
realization in \cite{AK}, see also \cite{FS}).

\section{Difference equations and the fermionic formulas}
It was originally observed in the context of the completeness
conjecture of the Bethe ansatz, and later by the original attempt at
proving the combinatorial Kirillov-Reshetikhin conjecture \cite{HKOTY},
the fermionic sum $M_{\boldsymbol \nu}(q;z)$ in Equation \eqref{pf} is
closely related to a difference equation called the $Q$-system.

Let $\chi_{a,k}$ be the character of the KR-module with highest weight
$k \omega_a$, restricted to ${\mathfrak{g}}\subset {\mathfrak{g}}[t]$.
\begin{example}
If ${\mathfrak{g}}=\mathfrak{sl}_n$ then the KR-modules are irreducible under the
restriction to ${\mathfrak{g}}$, and are the modules with ``rectangular highest
weights''. In that case, $\chi_{a,k}$ is a Schur function
$S_{(a)^k}(z_1,...,z_n)$ with $\prod z_i=1$.
\end{example}

For any Lie algebra, the functions $\chi_{a,k}$ satisfy a simple difference
equation: In the case where ${\mathfrak{g}}$ is simply-laced, this is a two-step
recursion relation. Consider the system
\begin{equation}\label{qsys}
Q^{(a)}_{k+1}Q^{(a)}_{k-1}= (Q^{(a)}_k)^2 - \prod_{b\neq a} (Q^{(b)}_k)^{-C_{ab}}.
\end{equation}
(The relation is only slightly more cumbersome for non-simply laced
algebras, and has a generalization for ${\mathfrak{g}}$ an affine algebra.)

A two-step recursion relation has a unique solution given initial
data. The natural initial data for the $Q$-system is
\begin{enumerate}
\item The character of the trivial representation is equal to 1,
  $\chi_{a,0}=1$. Therefore, set $Q^{(a)}_0=1$ for all $1\leq a \leq r$
  where $r$ is the rank of the algebra.
\item Identify $Q^{(a)}_1$ with the character of the fundamental
  KR-modules, $\chi_{a,1}$.
\end{enumerate}
\begin{theorem}[\cite{NakajimaKR}]\label{characters}
The characters of the Kirillov-Reshetikhin are solutions of the
$Q$-system \eqref{qsys} with the initial data (1) and (2).
\end{theorem}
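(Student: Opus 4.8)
The plan is to deduce the two-term recursion \eqref{qsys} from a stronger relation, the $T$-system, satisfied not by the bare $\mathfrak{g}$-characters but by the $q$-characters of Kirillov--Reshetikhin modules over the quantum affine algebra. Using the fact recorded above that the KR modules for the current algebra, the Yangian, and $U_q(\widehat{\mathfrak g})$ all restrict to one and the same $\mathfrak{g}$-module, I would first move to the quantum affine setting, where a robust theory of $q$-characters (in the sense of Frenkel--Reshetikhin) and an accompanying $T$-system are available. There each KR module $W^{(a)}_{k}$ carries, in addition to $(a,k)$, a spectral parameter, and the $T$-system is the family of identities
\[
[W^{(a)}_{k}]\,[\widetilde{W}^{(a)}_{k}]
= [W^{(a)}_{k+1}]\,[\widetilde{W}^{(a)}_{k-1}]
+ \prod_{b\neq a}[W^{(b)}_{k}]^{-C_{ab}}
\]
in the Grothendieck ring of finite-dimensional $U_q(\widehat{\mathfrak g})$-modules, where the tildes denote the spectral-parameter shifts that make the two tensor products on each side inequivalent. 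For simply-laced $\mathfrak{g}$ each exponent $-C_{ab}$ with $a\neq b$ is $0$ or $1$, so the product runs exactly over the neighbours of $a$ in the Dynkin diagram and matches \eqref{qsys} term by term.

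The steps, in order, are as follows. First I would check the two initial conditions, which are immediate: the module with $k=0$ is trivial, so $\chi_{a,0}=1$ and hence $Q^{(a)}_0=1$, while $Q^{(a)}_1=\chi_{a,1}$ holds by the very definition of the fundamental KR character. Second, I would set up the $q$-character combinatorics and identify $W^{(a)}_k$ as the irreducible module with the prescribed highest $l$-weight. Third --- the substantive step --- I would establish the $T$-system as a Grothendieck-ring identity, ideally by exhibiting a short exact sequence of $U_q(\widehat{\mathfrak g})$-modules whose two outer terms realize $W^{(a)}_{k+1}\otimes\widetilde{W}^{(a)}_{k-1}$ and $\bigotimes_{b\neq a}(W^{(b)}_{k})^{-C_{ab}}$, and whose total class equals $[W^{(a)}_{k}\otimes\widetilde{W}^{(a)}_{k}]$. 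Finally, the restriction functor to $\mathfrak{g}$, which forgets the spectral parameter and sends each class $[W^{(a)}_k]$ to the character $\chi_{a,k}$, is a ring homomorphism from the Grothendieck ring to the character ring of $\mathfrak{g}$, and it carries the displayed $T$-system precisely to \eqref{qsys}.

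The main obstacle is, of course, the $T$-system itself. Exactness of the relevant sequence is equivalent to the KR modules being \emph{special} --- each having a $q$-character with a single dominant monomial, namely its highest $l$-weight --- together with enough control on the remaining constituents to rule out unwanted common factors on the two sides. This is exactly the point at which Nakajima's geometry enters: the $t$-analogues of the $q$-characters are computed from the (equivariant) cohomology of graded quiver varieties, and the $T$-system is read off from the resulting stratification and localization identities, with the dominant-monomial property reflecting the structure of the central fibre. I therefore expect essentially all of the difficulty to reside in proving the special property and constructing the exact sequence; once these are in hand, the implication $T\Rightarrow Q$ and the verification of the initial data are purely formal.
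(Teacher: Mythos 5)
Your proposal follows the same route as the paper: the paper also obtains Theorem \ref{characters} by specializing the $T$-system satisfied by the $q$-characters of KR-modules (the algebraic Kirillov--Reshetikhin conjecture), whose proof for simply-laced $\mathfrak{g}$ is Nakajima's quiver-variety work \cite{NakajimaKR}, with the Grothendieck-ring/restriction argument and the initial conditions being the easy formal part. Your write-up simply makes explicit the exact-sequence and dominant-monomial (``special'') ingredients that the paper delegates to the citation.
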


The $Q$-system
is a specialization of the $T$-system, satisfied by the
transfer matrices of the XXZ spin chain, or by the $q$-characters
\cite{FrenResh} of the KR-modules.
\begin{remark}
  For any given quantum spin chain, one can derive Bethe ansatz
  equations from different functional relations, obtaining a different
  set of coupled algebraic equations (the Bethe equations). Although
  it is standard procedure to use Baxter's equation to derive Bethe
  equations, it is also possible to use the $T$-system, see e.g. \cite{DKM}. The
  resulting equations, their solution and linearized spectrum, take a
  different form depending on the original functional relation. This
  reflects the fact that in the degenerate case of a massless spectrum
  (the critical point) there may be several different descriptions of
  the spectrum as a quasi-particle spectrum. This degeneracy is
  resolved when a massive integrable perturbation is considered.
\end{remark}

The transfer matrices satisfy $T$-system relation, a conjecture
of Kirillov and Reshetikhin 
proved (for finite, simply-laced Lie algebras) by Nakajima
\cite{NakajimaKR}, using the realization of the representation theory
of the quantum affine algebra in terms of his quiver varities. The
$T$-system is satisfied by $q$-characters of the KR-modules
\cite{FrenResh}.  This is the algebraic Kirillov-Reshetikhin
conjecture. Nakajima even proved a deformed version of the
$T$-system which holds for twisted tensor products of
$KR$-modules. Theorem \ref{characters} follows from this work.

The relation of the $Q$-system to the multiplicity formulas starts as
follows.  Recall the definition of the ``$N$-sum'' in Remark
\ref{nsum}. Then there is a constant term identity for the $N$-sum in
terms of solutions of the $Q$-system. Define
\begin{equation}
Z_{\boldsymbol \nu,\lambda}(\mathbf Q_0,\mathbf Q_1)^{(k)} = \prod_{a=1}^r
Q_1^{(a)}(Q_0^{(a)})^{-1}\left( \prod_{i\geq 1} 
(Q_i^{(a)})^{\nu_i^{(a)}-\nu_{i+1}^{(a)}} \right) (Q_k^{(a)}
(Q_{k+1}^{(a)})^{-1})^{\langle \alpha_a,\lambda\rangle+1},
\end{equation}
where $Q_i^{(a)}$ are solutions of the $Q$-system \eqref{qsys}.
Define $\langle Z \rangle$ to be the constant term of $Z$ in
$\{Q_1^{(a)}\}_a$, evaluated at $\{Q_0^{(a)}=1\}_a$.
\begin{theorem}
Let 
$$N_{\boldsymbol\nu,\lambda}^{(k)}(1) = \langle
Z_{\boldsymbol\nu,\lambda}(\mathbf Q_0,\mathbf Q_1)^{(k)}\rangle. $$
Then there exists an integer $J$ such that whenever $k>J$,
$N_{\boldsymbol\nu,\lambda}^{(k)}(1)$ is independent of $k$, and is equal to
$N_{\boldsymbol\nu,\lambda}(1)$.
\end{theorem}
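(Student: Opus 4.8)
The plan is to show that the constant-term expression $\langle Z_{\boldsymbol\nu,\lambda}^{(k)}\rangle$ stabilizes in $k$ by exploiting the fact that the $Q$-system is a discrete integrable recursion whose solutions grow in a controlled, polynomial way in the initial data. First I would observe that $\langle\cdot\rangle$ extracts the coefficient of $\prod_a (Q_1^{(a)})^0$ after setting $Q_0^{(a)}=1$, so the whole object is really a function of the finitely many variables $\{Q_1^{(a)}\}$, with all higher $Q_i^{(a)}$ ($i\geq 2$) expressed through them via \eqref{qsys}. The dependence on $k$ enters only through the single factor $(Q_k^{(a)}(Q_{k+1}^{(a)})^{-1})^{\langle\alpha_a,\lambda\rangle+1}$. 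So the question is: in what sense does this ``tail factor'' contribute trivially to the constant term once $k$ is large?

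\medskip

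The key step is a degree/support argument on the Laurent monomials. From \eqref{qsys} one sees that $Q_{k+1}^{(a)}$ is determined rationally from $Q_k^{(a)}$, $Q_{k-1}^{(a)}$, and the neighboring $Q_k^{(b)}$; by the Laurent property of the cluster algebra attached to the $Q$-system (see Remark \ref{nsum} and \cite{FZ,DFK}), each $Q_k^{(a)}$ is a \emph{Laurent polynomial} in the initial cluster, and in particular its lowest-degree and highest-degree behavior in the variables $\{Q_1^{(a)}\}$ can be tracked. I would compute the tropical (leading-order) behavior of $Q_k^{(a)}(Q_{k+1}^{(a)})^{-1}$ as $k\to\infty$ and show that, after the evaluation $Q_0^{(a)}=1$, this ratio tends to a fixed monomial (or contributes only positive powers of the $Q_1^{(a)}$) for all $k$ beyond some threshold $J=J(\boldsymbol\nu,\lambda)$. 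The integer $J$ is governed by the ``sizes'' $\nu_i^{(a)}$ and the weight $\lambda$, precisely the data controlling the support of the monomials appearing in $Z^{(k)}_{\boldsymbol\nu,\lambda}$. Once the tail factor no longer interacts with the constant-term projection — i.e. once the monomials it contributes all carry strictly positive degree in each $Q_1^{(a)}$ so that they cannot combine with the rest of $Z$ to produce a constant — the constant term becomes insensitive to $k$.

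\medskip

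Concretely, I would argue in two movements. \textbf{Step 1:} Show that the number $N_{\boldsymbol\nu,\lambda}^{(k)}(1)$ counts, via the $N$-sum of Remark \ref{nsum}, solutions with $m_a$ bounded by the data $\mathbf n$ and $\langle\alpha_a,\lambda\rangle$; the role of $k$ is only to cut off the range of the summation variables $\mu^{(a)}_j$ indexed by $j\leq k$. \textbf{Step 2:} Show that the fermionic summand in \eqref{pf}, or its $N$-sum continuation, has support contained in $j\leq L$ for some $L$ depending on $\boldsymbol\nu$ and $\lambda$ but not on $k$ — because the $q$-binomial coefficients vanish (or the $N$-continued ones contribute nothing new) once $j$ exceeds the length of the relevant partitions. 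Then any $k>J:=L$ leaves the effective sum unchanged, giving $k$-independence and the identification with $N_{\boldsymbol\nu,\lambda}(1)$.

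\medskip

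\emph{The hard part will be} making the tropical/support estimate on the $Q$-system solutions precise and uniform in $k$: one must verify that the leading monomial of $Q_k^{(a)}(Q_{k+1}^{(a)})^{-1}$ in the initial variables genuinely stabilizes, and that no cancellation in the constant-term extraction reintroduces $k$-dependence for $k$ just above the threshold. This is where the integrability of the $Q$-system — specifically the Laurent property and the explicit recursive structure of the leading terms — does the real work, and where I would expect to invoke the detailed analysis of \cite{DFK}. A secondary subtlety is pinning down the optimal (or at least an explicit) value of $J$ in terms of $\boldsymbol\nu$ and $\langle\alpha_a,\lambda\rangle$, rather than merely asserting its existence.
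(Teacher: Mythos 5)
Your two-movement skeleton --- (i) identify $\langle Z^{(k)}_{\boldsymbol\nu,\lambda}\rangle$ with the $N$-sum truncated at level $k$ (summation variables $\mu^{(a)}_j$ with $j\le k$), and (ii) observe that the truncation is vacuous for large $k$ because each $\mu^{(a)}\vdash m_a$ has at most $m_a$ nonzero parts, so one may take $J=\max_a m_a$ (up to the length of $\boldsymbol\nu$) --- is indeed the shape of the paper's argument, which is an induction on $k$ by direct computation starting from the fermionic formula \eqref{pf}. The gap is that your Step 1 \emph{is} the theorem, and you never supply a mechanism for it: nothing in your proposal explains how the constant term of a product of $Q$-system solutions produces (continued) binomial coefficients at all. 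The actual computation first fixes a series-expansion convention for the factors $(Q_{k+1}^{(a)})^{-1}$ --- note that these inverses are \emph{not} Laurent polynomials in the seed, so the Laurent property you lean on does not even apply to them, and without an expansion convention $\langle\,\cdot\,\rangle$ is not defined on $Z^{(k)}$ --- and then peels off one level of the recursion at a time using \eqref{qsys}, each step generating one layer of binomial factors through expansions of the type $(1-x)^{-p-1}=\sum_{m\ge 0}\binom{p+m}{m}x^m$. That induction is the content of the proof; tropical degree estimates do not substitute for it. (In this paper the Laurent/polynomiality property is the engine of the \emph{next} result, Theorem \ref{nism}, the $M=N$ identity, not of this one.)

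Moreover, your proposed ``key step'' is false as stated. The tail factor $\prod_a\bigl(Q_k^{(a)}(Q_{k+1}^{(a)})^{-1}\bigr)^{\langle\alpha_a,\lambda\rangle+1}$ cannot cease to ``interact with the constant-term projection'' for large $k$: it is the only place $\lambda$ enters $Z^{(k)}$, so if it decoupled, the result would be independent of $\lambda$, which is absurd. In any consistent expansion it contributes \emph{negative} powers of the $Q_1^{(a)}$, which must pair against the positive powers coming from the $\boldsymbol\nu$-part; for instance, in type $A_1$ at $Q_0=1$ one has $Q_kQ_{k+1}^{-1}=Q_1^{-1}(1+Q_1^{-2}+\cdots)$. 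What is true --- and is the correct repair of the $k$-independence half of your argument --- is coefficientwise stabilization: by \eqref{qsys} in type $A_1$, $Q_kQ_{k+1}^{-1}-Q_{k+1}Q_{k+2}^{-1}=-(Q_{k+1}Q_{k+2})^{-1}$, whose expansion begins at order $Q_1^{-(2k+3)}$, so the expansion of the tail stabilizes to any fixed order as $k$ grows, while the $\boldsymbol\nu$-part is a fixed Laurent polynomial probing only finitely many orders. But even this repaired argument yields only that $\langle Z^{(k)}\rangle$ is eventually constant in $k$; the equality of that constant with $N_{\boldsymbol\nu,\lambda}(1)$, which is what the theorem asserts, still requires the missing inductive computation.
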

The proof is by induction, using direct computation, starting from the
fermionic formula \eqref{pf}.

We still need to show that $N=M$, however. Moreover, we need an
identity for the $q$-multiplicities themselves.  For this, we do not
need to use the representation theoretical interpretation of the
$Q$-system.  Instead, we will use the Laurent property of cluster
algebras.

\subsection{$Q$-systems as mutations in a cluster algebra} Here, we
give an interpretation of the variables $Q_k^{(a)}$ as cluster
variables in a cluster algebra.

A cluster algebra is the commutative algebra generated by the union of
{\em cluster variables}, defined recursively. It was
originally introduced by Fomin and Zelevinsky \cite{FZ} in a
representation theoretical context, but has been shown to have
applications far beyond the original motivation.  We refer to Fomin's
ICM lecture notes for a good overview \cite{Fomin}.

We use only the simplest version. Let $B$ be a skew symmetric $n\times
n$ integer matrix (equivalently, a quiver with no 1- or 2- cycles),
called the exchange matrix. Vertices of the quiver are numbered from
$1$ to $n$ and the integer $B_{ij}$ is the number of arrows from $j$
to $i$. Let $\mathbf x = (x_1,...,x_n)$ be formal (commutative) variables
associated with the vertices. Fix $1\leq j\leq n$ and define $x_j'$
\begin{equation}\label{exchange}
x_j' = \frac{\displaystyle\prod_{i:j\to i} x_i+\prod_{i:i\to j} x_i}{x_j}.
\end{equation}
This is called a mutation of $\mathbf x$ in the direction $j$, denoted by
the operation $\mu_j$. If $i\neq j$, $\mu_j(x_i)=x_i$. However, the
quiver itself changes under the mutation as follows:
\begin{itemize}
\item For any sequence $i\to j\to k$, add an arrow $i\to k$.
\item Reverse any arrows incident to $i$.
\item Erase any resulting 2-cycles.
\end{itemize}
The collection of generators of the cluster algebra is the result of
all possible sequences of mutations of $\mathbf x$. The pair $(B,\mathbf x)$ is
called the seed data.

Any $Q$-system (that is, Equation \eqref{qsys} and its
generalizations), can be shown to be a mutation in a cluster algebra
\cite{Ke08}. In the case of \eqref{qsys}, it is the cluster algebra
defined by the seed data $(\mathbf x_0,B)$ where
\begin{equation}\label{seed}
B=\left(\begin{array}{cc} 0 & -C \\ C & 0 \end{array} \right),\quad
\mathbf x_0=(Q_{0}^{(1)},\ldots,Q_{0}^{(r)};Q_{1}^{(1)},\ldots,Q_{1}^{(r)}).
\end{equation}
Note that we do not impose $Q_0^{(a)}=1$ at this stage.

The $Q$-system equations are a special subset of the mutations of the cluster
algebra\footnote{Although traditionally, the coefficients in a cluster
  algebra are taken to be $+1$, we keep the minus sign in the current
  context. This can be dealt with by (1) renormalizing the
  $Q$-variables or (2) introducing coefficients
  \cite{DFKcluster}. However this is irrelevant in the current
  context.}. They can be shown to be the equations a discrete
integrable system \cite{DFKqsys}, with the integrals of motion given
by those of the Toda system \cite{GSV}. This is due to the existence
of an integrable Poisson structure compatible with the cluster algebra
structure. Such a Poisson structure can always be deformed to give a
quantum system, which in the case of cluster algebras is called a
quantum cluster algebra \cite{BZ,FG} (see below).

Any cluster algebra (and a much larger class of discrete rational
evolution equations) can be shown to have a {\em Laurent
  property}. The transformation \eqref{exchange} is a rational
transformation. Although it is obvious after a single mutation, it is
not at all obvious after several steps of mutations that the rational
function is, in fact, a Laurent polynomial in the seed data, because
the term in the denominator is itself a polynomial in the initial seed data.
\begin{theorem}[Laurent property \cite{FZlaurent}]
  Any cluster variable in a cluster algebra is a Laurent polynomial in
  the cluster variables of any other seed in the cluster algebra.
\end{theorem}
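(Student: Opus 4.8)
The plan is to fix the target seed, call it $(B,\mathbf x)$ with $\mathbf x=(x_1,\dots,x_n)$, and prove that every cluster variable obtained from it by an arbitrary sequence of mutations lies in the Laurent ring $\mathcal L=\mathbb Z[x_1^{\pm1},\dots,x_n^{\pm1}]$; by the symmetry of the construction this is equivalent to the stated formulation (every cluster variable Laurent in every other seed). First I would organize all seeds along the $n$-regular tree $\mathbb T_n$, whose edges carry labels $1,\dots,n$ and whose vertices carry seeds, adjacent vertices differing by the mutation \eqref{exchange} in the direction of the connecting edge. Placing the target seed at a root vertex $t_0$, I would argue by induction on the distance in $\mathbb T_n$ from $t_0$ that all cluster variables at a given vertex belong to $\mathcal L$.

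The base case (distance $0$) is trivial, and a single mutation \eqref{exchange} manifestly produces a Laurent polynomial. The subtlety is the inductive step. If $t$ is reached from a vertex $t'$ by mutation in direction $k$, the new variable is $y'=P/y$, where $y$ is the $k$-th variable at $t'$ and $P$ is the exchange binomial, a polynomial in the remaining variables at $t'$ (which are unchanged and hence shared with $t$). By the inductive hypothesis $y$ and all variables entering $P$ already lie in $\mathcal L$, so $y'$ is a priori only a ratio of two elements of $\mathcal L$. The whole content of the theorem is that this ratio has no genuine poles, i.e. that $y$ divides $P$ inside $\mathcal L$. A naive iteration of the one-step Laurent property fails precisely here, because the denominator $y$ is itself a nontrivial polynomial in the root variables.

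To establish the divisibility I would invoke the Caterpillar Lemma of Fomin and Zelevinsky \cite{FZlaurent}, which reduces the global claim to a local, essentially rank-two computation. The idea is to isolate the last few edges of the path from $t_0$ to $t$ and to compare the Laurent expansions of the cluster variables produced by two consecutive mutations. The key algebraic input is that the two exchange polynomials arising at successive steps are coprime in $\mathcal L$; granting this, the factor $y$ in the denominator of $y'$ must already occur in $P$, and the pole cancels. Coprimality is checked by freezing all but the two active directions, which collapses the configuration to a cluster algebra of rank two, where the Laurent property and the requisite coprimality can be verified directly, the rank-two exchange patterns being periodic and entirely explicit.

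The hard part will be exactly this coprimality/divisibility step: controlling the denominators so that no spurious pole survives after several mutations. Everything else --- the tree bookkeeping, the induction, and the reduction to rank two --- is formal once the Caterpillar Lemma is in place. Since $\mathcal L$ is a unique factorization domain, once I know the relevant exchange polynomials share no common factor, the divisibility $y\mid P$ is automatic, completing the induction and hence the proof.
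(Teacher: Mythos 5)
The paper does not actually prove this theorem: it is quoted from \cite{FZlaurent}, so the only meaningful comparison is with the original Fomin--Zelevinsky argument, and your outline (seeds on the $n$-regular tree $\mathbb{T}_n$, induction rooted at a target seed, the Caterpillar Lemma, unique factorization in $\mathcal{L}=\mathbb{Z}[x_1^{\pm1},\dots,x_n^{\pm1}]$) is indeed the skeleton of that proof. However, the two steps you describe as carrying the whole weight are wrong as stated, and they are precisely where the difficulty lives.

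First, your induction-closing mechanism cannot work. You claim that coprimality of the exchange polynomials at two successive steps forces the divisibility $y\mid P$ at the final mutation. No such local statement at the far end of the path can close the induction, because knowing that $z$ is Laurent with respect to two consecutive intermediate seeds is genuinely insufficient: writing $t_1=\mu_j(t_0)$ and $t_2=\mu_k(t_1)$ with $k\neq j$, the element $1/x_j'=x_j/P$ is a Laurent \emph{monomial} in both the $t_1$- and the $t_2$-cluster (the variable $x_j'$ survives the mutation in direction $k$), yet it does not lie in $\mathcal{L}(t_0)$ since $P$ is not a unit. So $\mathcal{L}(t_1)\cap\mathcal{L}(t_2)\not\subset\mathcal{L}(t_0)$, and no amount of gcd bookkeeping between those two seeds repairs this. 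What the Caterpillar Lemma actually does happens at the root, not at ``the last few edges'': (i) one proves \emph{directly}, by a congruence computation modulo the exchange binomial $P$, that every cluster within distance three of $t_0$ lies in $\mathcal{L}$ --- this three-step computation, not a one-step divisibility, is the heart of the proof, and it is the only place where your ``$y$ divides $P$'' intuition is literally realized; (ii) one proves $\gcd(x_j',x_k'')=\gcd(x_j',x_j''')=1$ in $\mathcal{L}$, where $x_j'''$ is the variable created at $t_3=\mu_j(t_2)$; (iii) for $z$ at distance $m\geq 4$, the inductive hypothesis applied with heads $t_1$ and $t_3$ (each within distance $m-1$ of $z$) gives one expansion of $z$ whose denominator involves only factors of $x_j'$, and another whose denominator involves only factors of $x_k''$ and $x_j'''$; coprimality in the UFD $\mathcal{L}$ then forces the lowest-terms denominator to be a unit. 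Note that the coprime pair is not ``two successive exchange polynomials'' along the path to $z$, but the new cluster variables created adjacent to the root $t_0$.

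Second, your proposed verification of the local conditions by ``freezing all but the two active directions'' and appealing to ``the rank-two exchange patterns being periodic'' fails: rank-two exchange patterns are periodic only in finite type, i.e.\ when $|b_{12}b_{21}|\leq 3$; already for $b_{12}=-b_{21}=2$ (affine type) the pattern is aperiodic and the variables' denominators grow without bound. The actual verification in \cite{FZlaurent} uses no periodicity whatsoever: it is a direct manipulation of the exchange binomials of the form $\prod_i x_i^{[b_{ij}]_+}+\prod_i x_i^{[-b_{ij}]_+}$, which involve all $n$ variables (only two mutation \emph{directions} are active, the remaining variables entering as coefficients), and it is valid uniformly for all exchange matrices. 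So the architecture of your proposal is the right one, but both of the load-bearing steps need to be replaced by the correct ones before it becomes a proof.
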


Taking the $Q$-system with the initial seed data consistent with the
character interpretation, this implies the
following:
\begin{theorem}[\cite{DFKcluster}]\label{poly}
Any cluster variable (not just solutions of the $Q$-system) in the
cluster algebra with seed data \eqref{seed} is a {\em polynomial} in
the variables $(Q_1^{(1)},\ldots,Q_1^{(r)})$ after evaluation at $Q_0^{(a)}=1$.
\end{theorem}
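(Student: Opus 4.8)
The plan is to deduce the statement from the Laurent property together with the special two-term structure of the exchange relations \eqref{qsys}. Write $\phi$ for the specialization homomorphism $Q_0^{(a)}\mapsto 1$. By the Laurent property, any cluster variable $X$ is a Laurent polynomial in the initial seed variables $(Q_0^{(a)},Q_1^{(a)})_a$, so $\phi(X)$ is automatically a Laurent polynomial in $(Q_1^{(a)})_a$. The entire content of the theorem is therefore the \emph{absence of negative powers} of the $Q_1^{(a)}$ after specialization; equivalently, writing $X=N/D$ in lowest terms with $D=\prod_a (Q_0^{(a)})^{\alpha_a}(Q_1^{(a)})^{\beta_a}$, one must show that $\prod_a (Q_1^{(a)})^{\beta_a}$ divides $\phi(N)=N|_{Q_0=1}$ in $\mathbb Z[Q_1^{(1)},\dots,Q_1^{(r)}]$.

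First I would set up an induction on the mutation distance of the cluster containing $X$ from the initial seed $\mathbf x_0$. In the base case $X\in\{Q_0^{(a)},Q_1^{(a)}\}$ and $\phi(X)\in\{1,Q_1^{(a)}\}$ is manifestly polynomial. For the inductive step, a new cluster variable $X'$ is produced from variables of an already-treated cluster by a single exchange relation $X'X=M_1+M_2$, a sum of two monomials; applying $\phi$ gives $\phi(X')=(\phi(M_1)+\phi(M_2))/\phi(X)$, where by induction $\phi(X)$ and the $\phi(M_i)$ are polynomials in $(Q_1^{(a)})_a$. What must be shown is that this quotient is again a polynomial, i.e. that $\phi(X)$ divides $\phi(M_1)+\phi(M_2)$.

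The crucial structural input is that the specialization $Q_0^{(a)}=1$ is exactly the point at which the right-hand side of the exchange relation \eqref{qsys} degenerates: the polynomial $P_a(Q_0)=(Q_0^{(a)})^2-\prod_{b\neq a}(Q_0^{(b)})^{-C_{ab}}$ satisfies $P_a(1,\dots,1)=0$. This is what makes the specialization consistent with the recursion (it is the initial condition $Q_0^{(a)}=1$ of Theorem \ref{characters}), and it is the reason the pole of a variable such as $Q_{-1}^{(a)}=P_a(Q_0)/Q_1^{(a)}$ along $Q_1^{(a)}=0$ is cancelled after specialization (indeed $\phi(Q_{-1}^{(a)})=0$). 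I would propagate this cancellation through the induction by strengthening the hypothesis: not merely that $\phi(X)$ is a polynomial, but that on the hyperplane $\{Q_0=1\}$ the numerator $N$ vanishes along $\{Q_1^{(a)}=0\}$ to order at least $\beta_a$. This strengthened statement is what feeds the divisibility required in the inductive step, and it can be tracked because each exchange relation changes the relevant denominator exponents by a controlled amount.

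The main obstacle is precisely that $Q_0=1$ is a non-generic, boundary point of the cluster torus at which the transition maps between seeds become singular — the computation $\phi(Q_{-1}^{(a)})=0$ shows that an initial-seed coordinate can be sent to $0$, so the Laurent property cannot simply be transported across $\phi$ by re-expanding $X$ in a neighbouring seed. Overcoming it therefore requires the vanishing/divisibility statement above rather than a naive change of coordinates. I would anchor the argument on the forward solutions $Q_k^{(a)}$ with $k\geq 0$, which by Theorem \ref{characters} specialize at $Q_0=1$ to genuine Kirillov--Reshetikhin characters and hence to honest polynomials in the fundamental characters $Q_1^{(a)}$, and then control the remaining cluster variables using the two-term form of every exchange relation together with the vanishing $P_a(1)=0$, which is exactly the mechanism guaranteeing that each potential $Q_1^{(a)}$-pole is cancelled on $\{Q_0=1\}$.
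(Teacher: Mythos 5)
You correctly isolate the two ingredients the paper also relies on --- the Laurent property and the vanishing of the exchange polynomial at $Q_0^{(a)}=1$ --- but the argument you build from them has a genuine gap at the inductive step. You propose induction on mutation distance, with the step $\phi(X')=(\phi(M_1)+\phi(M_2))/\phi(X)$, which requires showing that $\phi(X)$ divides $\phi(M_1)+\phi(M_2)$. This is exactly where the plan breaks: as you yourself observe, cluster variables can specialize to zero under $\phi$ (e.g.\ $\phi(Q_{-1}^{(a)})=0$), in which case the identity $\phi(X')\phi(X)=\phi(M_1)+\phi(M_2)$ degenerates to $0=0$ and carries no information about $\phi(X')$; the division is not even defined. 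Your proposed repair --- a strengthened hypothesis tracking the order of vanishing of the numerator along $\{Q_1^{(a)}=0\}$ on the hyperplane $\{Q_0=1\}$ --- is announced but never formulated precisely, and it is not shown (nor clear) that it propagates under a single mutation: when $X'=(M_1+M_2)/X$, the numerator of $X'$ is not controlled by the restrictions of the numerators of $M_1,M_2,X$ to $\{Q_0=1\}$, because the cancellations effected by the division take place away from that non-generic locus. In effect you are asking to re-prove a specialized Laurent phenomenon by induction over the exchange graph, which is the hard content of Fomin--Zelevinsky's theorem rather than a consequence of it. The appeal to Theorem \ref{characters} for the forward variables $Q_k^{(a)}$ is legitimate but does not help with cluster variables that are not $Q$-system solutions, which is precisely the extra scope of the statement (``not just solutions of the $Q$-system'').

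The paper's proof avoids any induction over mutations by invoking the Laurent property globally in \emph{two} seeds. In the $\mathfrak{sl}_2$ illustration: expand $x=Q_0^{-m}\sum_{n\in\mathbb Z}p_n(Q_0)Q_1^n$ in the initial seed; then substitute the backward exchange $Q_1=N(Q_0)/Q_{-1}$, with $N(Q_0)=Q_0^2-1$, and invoke the Laurent property with respect to the adjacent seed $(Q_{-1},Q_0)$. Since monomials $Q_{-1}^{-n}$ with distinct $n$ cannot cancel one another, each coefficient $p_n(Q_0)N(Q_0)^{n}$ must separately be a Laurent polynomial in $Q_0$, which forces $N(Q_0)^{|n|}$ to divide $p_n(Q_0)$ for every $n<0$; since $N(1)=0$, all negative-power terms vanish at $Q_0=1$, leaving a polynomial in $Q_1$. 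This yields in one stroke exactly the divisibility your induction was trying to manufacture, using the (already proven) Laurent phenomenon as a black box, with no step-by-step division and no difficulty from variables specializing to zero; the general rank case is the same argument with multi-indices. If you want to salvage your write-up, replace the induction by this two-seed comparison: your key observation that the exchange polynomial vanishes at the evaluation point then does precisely the work you intended it to.
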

\begin{proof}
  This is a consequence of the Laurent phenomenon and the fact that
  the right hand side (the numerator in the exchange relation) of
  \eqref{qsys} vanishes at $k=0$. 

  We illustrate this for the case of ${\mathfrak{g}}=\mathfrak{sl}_2$. The
  generalization is clear. Let $x$ be a cluster variable in the
  cluster algebra. Then $x(Q_{-1};Q_{0}) = Q_0^{-m} \sum_{n\in\mathbb Z}
  p_n(Q_0) Q_1^n$, where $p_n$ is a polynomial. Performing the
  exchange $Q_{1}=N(Q_0)/ Q_{-1}$, we have $x=Q_0^{-m} \sum_{n\in \mathbb Z}
  p_n(Q_0) Q_{1}^{-n} N(Q_0)^{n}$. If $n<0$, since $N(Q_0)^{n}$ in the
  denominator is a polynomial, it must cancel with the term
  $p_{n}(Q_0)$ in the numerator, because the result must be a Laurent
  polynomial. That is, for any $n<0$, $p_n(Q_0)$ is divisible by
  $N(Q_0)$. (Up until this point, the argument holds for any
  bi-partite cluster algebra of any rank.)  Therefore, $p_n(1)=0$ for
  any $n<0$. Therefore, the cluster variable $x$ under this evaluation
  has only terms $Q_1^{n}$ with $n\geq 0$. The generalization to
  arbitrary rank relies on the identical argument: All variables and
  indices should be changed to multi-variables and multi-indices.
\end{proof}

When applied to the $Q$-system, the theorem implies that the
KR-modules are generated as Groethendieck ring by the fundamental
KR-modules.

The importance of polynomiality is in the proof of the ``$M=N$
conjecture'' \cite{HKOTY} which is the final step in the proof of the
combinatorial KR-conjecture \cite{DFK,Ke11} and hence the
Feigin-Loktev conjectures \cite{AK}.
\begin{theorem}[\cite{DFK}]\label{nism}
The constant term in $Q_1^{(a)}$ of $Z_{\boldsymbol\nu,\lambda}(\mathbf Q_0=1,\mathbf Q_1)$
has no contributions from terms in the summation in which any of the integers
$p_{a,i}<0$. That is, $M_{\boldsymbol\nu,\lambda}(1)=N_{\boldsymbol\nu,\lambda}(1)$.
\end{theorem}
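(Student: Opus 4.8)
The plan is to compute the constant term $\langle Z_{\boldsymbol\nu,\lambda}(\mathbf Q_0=1,\mathbf Q_1)\rangle=N_{\boldsymbol\nu,\lambda}(1)$ explicitly, by expanding each solution $Q_i^{(a)}$ of the $Q$-system \eqref{qsys} in the initial seed variables $Q_1^{(a)}$ after setting $Q_0^{(a)}=1$, and to show that the fermionic sum produced by this expansion is exactly the $M$-sum \eqref{pf} rather than the full $N$-sum. At $q=1$ the two sums differ only through the binomial coefficients with $p_j^{(a)}<0$, continued via the formula of Remark \ref{nsum}; so the whole statement reduces to showing that every term of the summation over $\boldsymbol\mu$ in which some $p_j^{(a)}<0$ contributes zero to the constant term. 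The engine for this vanishing is Theorem \ref{poly}: after evaluation at $\mathbf Q_0=1$, each cluster variable $Q_i^{(a)}$ is an honest polynomial in the $Q_1^{(a)}$, with no negative powers.

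First I would treat the rank-one case $\mathfrak{g}=\mathfrak{sl}_2$, mirroring the illustration already given in the proof of Theorem \ref{poly}. Here the $Q$-system reads $Q_{k+1}Q_{k-1}=Q_k^2-1$, the factor $(Q_k/Q_{k+1})^{\ell+1}$ is expanded as a Laurent series in $Q_1$, and the expansion coefficients are precisely the $q=1$ binomial coefficients appearing in the sum. The summation index $\mu_i$ records the order of the expansion at the $i$-th step of the recursion, and the integer $p_i=\sum_{j\leq i}(\nu_j-2\mu_j)$ enters as the exponent controlling the corresponding binomial. A term with $p_i<0$ is exactly one requiring a strictly negative power of $Q_1$ to appear in the polynomial expansion of the intermediate variable $Q_i$ at $Q_0=1$; but such powers do not occur by Theorem \ref{poly}, so that coefficient is identically zero. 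Hence only the $p_i\geq 0$ terms survive, and these reassemble into $M_{\nu,\lambda}(1)$.

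The general-rank case proceeds by the same argument, with all variables and indices promoted to multi-variables and multi-indices exactly as in the closing remark of the proof of Theorem \ref{poly}. The sequence $\pi^{(a)}=\nu^{(a)}-\sum_b C_{ab}\mu^{(b)}$ and its partial sums $p_j^{(a)}$ are now coupled across colors through the Cartan matrix, but the polynomiality statement holds seed variable by seed variable, so the vanishing of the coefficient of any monomial containing a negative power of some $Q_1^{(b)}$ is unaffected. Each condition $p_j^{(a)}<0$ again forces a negative power in the expansion of $Q_j^{(a)}$ at $\mathbf Q_0=1$, which is killed by Theorem \ref{poly}, leaving only the $M$-sum.

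The main obstacle is the bookkeeping of the first step: identifying, term by term, the combinatorial $N$-sum with the Laurent expansion of $Z_{\boldsymbol\nu,\lambda}$, so that the sign of each $p_j^{(a)}$ is matched to the presence or absence of a negative power of $Q_1^{(b)}$. One must verify that expanding the product of $Q$-variables together with the factors $(Q_k^{(a)}/Q_{k+1}^{(a)})^{\ell_a+1}$ reproduces exactly the continued $q$-binomial coefficients of Remark \ref{nsum} with the stated arguments $p_j^{(a)}+\mu_j^{(a)}-\mu_{j+1}^{(a)}$, and that the quadratic form $Q(\boldsymbol\mu)$ emerges correctly from the grading. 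The coupling through $C$ is what makes this dictionary technically delicate; once it is in place, the polynomiality input from the cluster algebra does the rest automatically.
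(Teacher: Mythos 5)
You should first be aware that the paper contains no proof of Theorem \ref{nism}: it is stated with a citation to \cite{DFK}, and the surrounding text records only the strategy, namely that the constant-term representation $N_{\boldsymbol\nu,\lambda}(1)=\langle Z_{\boldsymbol\nu,\lambda}\rangle$ established earlier in the section, combined with the polynomiality Theorem \ref{poly}, is what yields $M=N$. Your proposal adopts exactly this strategy, so in outline you agree with the route the paper attributes to \cite{DFK}.

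The genuine gap is in the mechanism you give for the vanishing. You reduce the theorem to the claim that every term of the $N$-sum with some $p^{(a)}_j<0$ ``contributes zero'' to the constant term, and you argue that each such term is individually killed because it would require a negative power of $Q_1$, which Theorem \ref{poly} forbids. This cannot work term by term. At $q=1$ the continuation of Remark \ref{nsum} equals $\prod_{j=1}^{m}\frac{p+j}{j}=\frac{(p+1)(p+2)\cdots(p+m)}{m!}$: it vanishes automatically when $-m\le p\le -1$, but it is a \emph{nonzero} integer whenever $p\le -m-1$. So the $N$-sum genuinely contains nonzero terms with negative $p$'s --- that is exactly why the paper calls $M=N$ ``highly non-trivial'' --- and their disappearance is an aggregate cancellation, not a term-wise one; no polynomiality statement about the individual variables $Q_i^{(a)}\big|_{\mathbf Q_0=1}$ can make such a term individually zero. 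What the proof in \cite{DFK} actually does is produce a second, reorganized evaluation of the same constant term: an iteration over successive seeds $(\mathbf Q_i,\mathbf Q_{i+1})$ in which one coefficient is extracted at each step, and in which only terms with $p^{(a)}_j\ge 0$ can ever arise because each step reads off a coefficient of a polynomial; comparing the two evaluations then gives $M=\langle Z\rangle=N$. Note also that this iteration cannot simply quote Theorem \ref{poly} ``seed variable by seed variable,'' as you assert: that theorem concerns expansion in the initial seed with $\mathbf Q_0=1$, one cannot impose two seeds equal to $1$ simultaneously, so a renormalized polynomiality statement is needed at each step, together with the large-$k$ stabilization. Constructing this iteration and matching its output against the binomials of \eqref{pf} is precisely the ``bookkeeping'' you defer as an obstacle at the end; it is not a routine verification but the actual content of the theorem.
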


The Laurent phenomenon and the polynomiality theorem generalize to the
quantum $Q$-system.

\subsection{The quantum $Q$-system from quantum cluster
  algebras}\label{sec-quantumQ} 

We are interested in the $q$-graded
version of Theorem \ref{nism}. This is obtained by using a
$q$-deformation of the $Q$-system.
There is a constant term identity for the graded
partition function \eqref{pf} in terms of the solutions of the quantum
$Q$-system. Aside from enabling us to prove that ``$N(q)=M(q)$'', it
gives yet another interpretation of the grading of the
multiplicity.

Given any skew-symmetric exchange matrix, one can define a quantum
cluster algebra \cite{BZ,FG}, a deformation of the compatible Poisson
structure of the cluster algebra \cite{GSVpoisson}. A quantum cluster
algbra is a 
non-commutative algebra generated by the seed data obeying
$q$-commutation relations, together with all its mutations. The
combinatorial data is the same as in the classical case, and the
exchange matrix is still the same matrix $B$.

Performing this deformation for the cluster algebra of the $Q$-system,
one obtains a quantum $Q$-system:
\begin{equation}\label{quantumQ}
t^{\Lambda_{a,a}}\mathcal Q_{k+1}^{(a)}\mathcal Q_{k-1}^{(a)} = (\mathcal Q_k^{(a)})^2 -
\prod_{b\neq a} (\mathcal Q_k^{(a)})^{-C_{ab}},
\end{equation}
where $\mathcal Q_k^{(a)}$ generate a non-commutative algebra defined by
\eqref{quantumQ} and the commutation relations
\begin{equation}\label{commutation}
\mathcal Q_n^{(a)}\mathcal Q_{n+1}^{(b)}=t^{\Lambda_{ab}}\mathcal Q_{n+1}^{(b)}\mathcal Q_n^{(a)},
\end{equation}
with $\Lambda =|C| C^{-1}$. The variables
$\{\mathcal Q_n^{(1)},...,\mathcal Q_n^{(r)}\}$ commute.

We will eventually identify $q=t^{-|C|}$ in our derivation of the
$M$-sums below.

Given initial seed data $\mathbf x_0=(\mathcal Q_0^{(a)},\mathcal Q_1^{(a)})_{a\in[1,r]}$,
any cluster variable can be expressed as a Laurent polynomial in the
initial seed data, with coefficients in $\mathbb Z[t,t^{-1}]$ (the Laurent
phenomenon for quantum cluster algebras was proven in
\cite{BZ}). Therefore, any Laurent polynomial $M$ of cluster variables can
be expressed as a Laurent polynomial in terms of any initial cluster
seed, for example, $\{\mathcal Q_0^{(a)},\mathcal Q_1^{(a)}\}$. By using the
commutation relations \eqref{commutation}, this Laurent polynomial
can be written in a normal ordered form, as {a finite sum}
\begin{equation}\label{normal}
M = \sum_{\mathbf n,{\mathbf m}\in\mathbb Z^r}
\prod_{a=1}^r(\mathcal Q_0^{(a)})^{m_a}\prod_{b=1}^r(\mathcal Q_1^{(b)})^{n_b} f_{{\mathbf m},\mathbf n}(t)
\end{equation}
where $f_{\mathbf n,{\mathbf m}}(t)\in \mathbb Z[t,t^{-1}]$.

We define the analogue of ``a constant term'' identity in the quantum
case by taking the constant term of this expression in $Q_1^{(a)}$,
and by evaluating at $Q_0^{(a)}=1$. 

\begin{definition}
Given a Laurent polynomial $M$ in $\{\mathcal Q_0^{(a)},\mathcal Q_1^{(b)}\}_{a,b}$,
define its constant term evaluated at $\mathcal Q_0^{(a)}=1$ by first,
defining the coefficients $f_{{\mathbf m},\mathbf n}(t)\in \mathbb Z[t,t^{-1}]$ as in
\eqref{normal}, then defining
$$
\langle M \rangle = \sum_{{\mathbf m}} f_{{\mathbf m},0}(t).
$$
\end{definition}
(Note that it is important to perform the
evaluation {\em after} normal ordering the expression, otherwise, we
miss out on the $t$-grading.)
 
The quantum Laurent property can be shown to imply that for the
quantum $Q$-system, any cluster variable, after evaluation at
$\mathcal Q_0^{(a)}=1$, is in fact a polynomial in $\{\mathcal Q_1^{(a)}\}_a$ with
coefficients in $\mathbb Z[t,t^{-1}]$ (the analog of theorem \ref{poly}).

For a given finite sequence $\boldsymbol\nu$ and a fixed $k$, define
\begin{equation}\label{mprod}
M_{\boldsymbol \nu,\lambda}^{(k)} = \prod_{a=1}^r\left(\mathcal Q_1^{(a)}
(\mathcal Q_0^{(b)})^{-1}\right) \prod_{i\geq 1}^{\rightarrow} \prod_{a=1}^r
(\mathcal Q_i^{(a)})^{\nu_i^{(a)}-\nu_{i+1}^{(a)}} \prod_{a=1}^r
(\mathcal Q_k^{(a)}(\mathcal Q_{k+1}^{(a)})^{-1})^{\langle\omega_a,\lambda\rangle+1}
\end{equation}

Again, when $k$ is sufficiently large, $\langle M_{\boldsymbol
  \nu,\lambda}^{(k)} \rangle$ is independent of $k$.

Upon
multiplying by an appropriate power of $q$ and identifying the
deformation parameter $t$ of the cluster algebra as $q=t^{-|C|}$, we
have
\begin{theorem}[Constant term identity \cite{DFKfusion}]\label{constantterm}
$$
M_{\boldsymbol \nu,\lambda}(q^{-1}) = q^{h(\boldsymbol \nu,\lambda)} \langle  M_{\boldsymbol \nu,\lambda} \rangle.
$$
for $k$ sufficiently large. Here the normalization factor is
$$
h(\boldsymbol \nu,\lambda) = -\frac12 \sum_{a,b=1}^r\sum_{i\geq1}
\nu_i^{(a)}C^{-1}_{ab}\nu_i^{(b)} -\frac12 \sum_{a=1}^r C^{-1}_{aa}
\ell_a - \sum_{a,b=1}^r C^{-1}_{ab} \nu_1^{(b)}.
$$
\end{theorem}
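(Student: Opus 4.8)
The plan is to compute the right-hand side $\langle M_{\boldsymbol\nu,\lambda}\rangle$ directly and match it, term by term, to the fermionic sum \eqref{pf}. By the quantum Laurent property \cite{BZ} and the quantum analogue of Theorem \ref{poly} stated above, every cluster variable $\mathcal Q_i^{(a)}$ with $i\geq 2$ becomes, after evaluation at $\mathcal Q_0^{(a)}=1$, a polynomial in $\{\mathcal Q_1^{(b)}\}_b$ with coefficients in $\mathbb Z[t,t^{-1}]$. So the first step is to substitute these expansions into the monomial \eqref{mprod}, bring the result to the normal-ordered form \eqref{normal}, and then extract $\sum_{\mathbf m} f_{\mathbf m,0}(t)$. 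This is the $q$-graded lift of the classical computation that gave $N^{(k)}_{\boldsymbol\nu,\lambda}(1)=\langle Z_{\boldsymbol\nu,\lambda}^{(k)}\rangle$; the grading, invisible at $t=1$, is precisely what we must now retain.

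The engine of the proof is an explicit recursive solution of the quantum $Q$-system \eqref{quantumQ}. I would solve for $\mathcal Q_i^{(a)}$ one index at a time, writing each as a finite normal-ordered sum of monomials in $\mathcal Q_1^{(b)}$ whose $\mathcal Q_1$-exponents are recorded by a collection of nonnegative integers; these integers are exactly the summation variables $\boldsymbol\mu=(\mu_i^{(a)})$ of \eqref{pf}. The decisive structural observation is that the two terms on the right of \eqref{quantumQ} --- the square $(\mathcal Q_k^{(a)})^2$ and the cross term $\prod_{b\neq a}(\mathcal Q_k^{(b)})^{-C_{ab}}$ --- reproduce, after normal ordering, the two-term $q$-Pascal recursion satisfied by the $q$-binomial coefficients. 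Iterating \eqref{quantumQ} therefore assembles precisely the product of $q$-binomials appearing in \eqref{pf}, with the arguments $p_j^{(a)}$ emerging as the partial row sums of $\pi^{(a)}=\nu^{(a)}-\sum_b C_{ab}\mu^{(b)}$ exactly as stated, while the factors $(\mathcal Q_i^{(a)})^{\nu_i^{(a)}-\nu_{i+1}^{(a)}}$ in \eqref{mprod} control how many $q$-binomials of each type occur.

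Extracting the constant term in $\mathcal Q_1$ then imposes the balancing of exponents that localizes the sum to multipartitions $\boldsymbol\mu\vdash\mathbf m$. It remains to collect every power of $t$: those carried explicitly by the factors $t^{\Lambda_{a,a}}$ in \eqref{quantumQ}, and those generated by each reordering through the commutation relations \eqref{commutation} with $\Lambda=|C|C^{-1}$. Under the identification $q=t^{-|C|}$ I would check that this total collapses into $q^{Q(\boldsymbol\mu)}$ with $Q(\boldsymbol\mu)=\frac12\sum_{a,b}\sum_i\mu_i^{(a)}C_{ab}\mu_i^{(b)}$, and that the residual $t$-power left from normal-ordering the fixed outer factors $\mathcal Q_1^{(a)}(\mathcal Q_0^{(a)})^{-1}$ and the boundary factors $(\mathcal Q_k^{(a)}(\mathcal Q_{k+1}^{(a)})^{-1})^{\langle\omega_a,\lambda\rangle+1}$ is exactly the stated normalization $q^{h(\boldsymbol\nu,\lambda)}$; the substitution $q=t^{-|C|}$ is what turns the $q$ produced on the quantum side into the argument $q^{-1}$ on the fermionic side. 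For $k$ larger than some $J$ the boundary factors contribute only their leading monomial to the constant term, so $\langle M^{(k)}_{\boldsymbol\nu,\lambda}\rangle$ stabilizes and the resulting $\boldsymbol\mu$-sum is the full, $k$-independent fermionic sum --- the graded refinement of the stabilization of $N^{(k)}(1)$.

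The hard part will be the non-commutative $t$-power bookkeeping in the previous step. At $t=1$ the entire computation degenerates to the elementary constant-term and binomial manipulation behind the classical identity, but here every transposition through \eqref{commutation} contributes a power of $t$, and one must prove that the accumulated exponent --- summed over all iterations of \eqref{quantumQ} and all reorderings needed to reach normal form --- telescopes into the single clean quadratic form $Q(\boldsymbol\mu)$ together with the one overall shift $h(\boldsymbol\nu,\lambda)$, leaving no index-dependent $t$-factors lurking inside the $q$-binomials. I expect to control this by an induction on $k$ (and on the length of $\boldsymbol\nu$) parallel to the classical induction of \cite{DFK}, with \eqref{quantumQ} furnishing the inductive step and the $q$-Pascal identity matching it to the recursion for the $q$-binomial coefficients.
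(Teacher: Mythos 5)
First, a caveat about the comparison itself: this survey does not prove Theorem \ref{constantterm} at all --- it is quoted from \cite{DFKfusion}, and the only methodological hint the paper gives (stated for the classical, $q=1$ analogue) is that the proof is ``by induction, using direct computation, starting from the fermionic formula \eqref{pf}.'' Your proposal does share the skeleton of that argument: induction on the seed index $k$, the quantum $Q$-system \eqref{quantumQ} as the inductive engine, normal-ordering $t$-powers assembling into $q^{Q(\boldsymbol\mu)}$, and stabilization of $\langle M^{(k)}_{\boldsymbol\nu,\lambda}\rangle$ for large $k$. To that extent the outline points in the right direction.

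However, there is a genuine gap at the center of the proposal. Your one concrete mechanism --- that the two terms on the right of \eqref{quantumQ} ``reproduce, after normal ordering, the two-term $q$-Pascal recursion'' so that iterating the quantum $Q$-system ``assembles precisely the product of $q$-binomials'' of \eqref{pf} --- is asserted, not proved, and it misidentifies where the $q$-binomials come from. The relation \eqref{quantumQ} is quadratic, of discrete Hirota/Toda type; the $q$-Pascal rule is a linear relation among binomials with shifted arguments, and there is no direct correspondence between the two. In the actual argument of \cite{DFK,DFKfusion}, the binomials are produced one factor at a time by the (non-commutative) $q$-binomial theorem when the constant term is transported from the seed $(\mathcal Q_k,\mathcal Q_{k+1})$ to the seed $(\mathcal Q_{k+1},\mathcal Q_{k+2})$; the induction ``peels off'' one binomial per step and never needs the explicit expansion of $\mathcal Q_i^{(a)}$ as a polynomial in $\{\mathcal Q_1^{(b)}\}_b$. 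Your alternative plan --- expand every $\mathcal Q_i^{(a)}$ in the fundamental variables, substitute into \eqref{mprod}, and reorder --- hits two obstacles you do not address: (i) those expansions are Chebyshev-like, with alternating signs, whereas the fermionic sum is manifestly positive, so a cancellation argument is required and none is sketched; and (ii) the commutation relations \eqref{commutation} are given only for adjacent indices (variables sharing a cluster); non-adjacent $\mathcal Q$'s do \emph{not} $q$-commute (e.g.\ $\mathcal Q_0^{(a)}$ and $\mathcal Q_2^{(a)}$ satisfy a genuinely more complicated relation), so ``collecting powers of $t$ from each transposition'' is not even well defined for the reorderings your plan needs. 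Since the $t$-bookkeeping and the stabilization at large $k$ are exactly the deferred steps (``I would check,'' ``I expect to control this''), what remains of the proposal is a restatement of the theorem together with a strategy whose only specified mechanism would fail as stated.
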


The polynomiality property, which follows from the Laurent property for
the quantum $Q$-system, implies
\begin{lemma}[\cite{DFKfusion}]
The cluster variables in the quantum cluster algebra corresponding to
the $Q$-system, after normal ordering and evaluation at
$\Q_0^{(a)}=1$ for all $a$, are polynomials in $\{\Q_1^{(a)}\}_a$.
\end{lemma}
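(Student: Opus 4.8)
The plan is to follow, step for step, the proof of the classical polynomiality statement in Theorem \ref{poly}, adapting each step to the noncommutative, $t$-graded setting. Let $x$ be any cluster variable of the quantum cluster algebra with seed data \eqref{seed}. By the quantum Laurent property of \cite{BZ}, $x$ is a Laurent polynomial in the variables of the seed $\{\mathcal Q_0^{(a)},\mathcal Q_1^{(a)}\}_a$ with coefficients in $\mathbb Z[t,t^{-1}]$. Using the commutation relations \eqref{commutation} I would bring it to the normal-ordered form \eqref{normal},
$$
x = \sum_{\mathbf m,\mathbf n}\prod_{a=1}^r (\mathcal Q_0^{(a)})^{m_a}\prod_{b=1}^r (\mathcal Q_1^{(b)})^{n_b}\, f_{\mathbf m,\mathbf n}(t).
$$
The goal is then to show that, after evaluation at $\mathcal Q_0^{(a)}=1$, every monomial with some $n_b<0$ carries a vanishing coefficient, so that $x|_{\mathcal Q_0=1}$ is a genuine polynomial in $\{\mathcal Q_1^{(a)}\}_a$.

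The key step exploits the neighboring seed $\{\mathcal Q_{-1}^{(a)},\mathcal Q_0^{(a)}\}_a$, related to \eqref{seed} by the $k=0$ instance of the quantum $Q$-system \eqref{quantumQ}:
$$
t^{\Lambda_{a,a}}\,\mathcal Q_1^{(a)}\mathcal Q_{-1}^{(a)} = \mathcal N^{(a)},\qquad \mathcal N^{(a)} := (\mathcal Q_0^{(a)})^2 - \prod_{b\neq a}(\mathcal Q_0^{(b)})^{-C_{ab}}.
$$
The crucial observation, exactly as in the classical case, is that $\mathcal N^{(a)}$ is built only from the mutually commuting level-$0$ variables $\{\mathcal Q_0^{(b)}\}_b$ and \emph{vanishes identically upon setting all $\mathcal Q_0^{(b)}=1$}, since then $\mathcal N^{(a)}=1-1=0$. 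Solving $\mathcal Q_1^{(a)}=t^{-\Lambda_{a,a}}\mathcal N^{(a)}(\mathcal Q_{-1}^{(a)})^{-1}$ and substituting into the normal-ordered expression, each factor $(\mathcal Q_1^{(a)})^{n_b}$ with $n_b<0$ produces a positive power of $\mathcal Q_{-1}^{(a)}$ together with a negative power of $\mathcal N^{(a)}$, i.e. a $\mathcal N^{(a)}$ in the denominator.

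By the quantum Laurent property applied to the seed $\{\mathcal Q_{-1}^{(a)},\mathcal Q_0^{(a)}\}_a$, the same $x$ is a Laurent polynomial in these variables with coefficients in $\mathbb Z[t,t^{-1}]$; in particular no factor $\mathcal N^{(a)}$ can survive in a denominator. Re-normal-ordering the substituted expression (which now involves only the adjacent levels $-1$ and $0$) and matching, the coefficient attached to each monomial with $n_b<0$ must be divisible, inside the commutative algebra generated by $\{\mathcal Q_0^{(b)}\}_b$ over $\mathbb Z[t,t^{-1}]$, by the corresponding power of $\mathcal N^{(a)}$. Since $\mathcal N^{(a)}$ vanishes at $\mathcal Q_0=1$, every such coefficient vanishes there as well, which is precisely the claim. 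This reproduces the classical divisibility argument of Theorem \ref{poly}, now with $t$-dependent coefficients.

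The main obstacle is making the divisibility statement rigorous in the noncommutative ring: classically ``divisible by $\mathcal N^{(a)}$'' is unambiguous because the ambient ring is a commutative Laurent ring, whereas here $\mathcal N^{(a)}$ fails to commute with $\mathcal Q_{-1}^{(a)}$ and with the $\mathcal Q_1$'s. The resolution is that the only noncommutativity is the controlled one of \eqref{commutation}: the variables $\{\mathcal Q_n^{(a)}\}_a$ at a fixed level $n$ commute, and moving $\mathcal N^{(a)}$ — a polynomial in level-$0$ variables — past $\mathcal Q_{-1}^{(a)}$ produces only an overall integer power of $t$, which is a unit and hence irrelevant to divisibility of the $\mathbb Z[t,t^{-1}]$-coefficients. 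The one point requiring care, as stressed after the Definition, is that the operations must be performed strictly in the order: substitute $\mathcal Q_1=t^{-\Lambda}\mathcal N\,\mathcal Q_{-1}^{-1}$, then normal-order, then specialize $\mathcal Q_0^{(a)}=1$; performing the evaluation prematurely would destroy the $t$-grading and the divisibility bookkeeping. With this ordering, the argument reduces cleanly to the commutative case.
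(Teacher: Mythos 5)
Your overall strategy (quantum Laurent phenomenon plus vanishing of the exchange numerator at the evaluation point) is the route the paper gestures at --- the paper itself gives no details, saying only that the quantum Laurent property ``can be shown to imply'' the lemma and deferring to \cite{DFKfusion} --- but the one step you dismiss as controlled bookkeeping is exactly where the quantum argument cannot be reduced to the classical proof of Theorem \ref{poly}, and as written that step is false. Conjugating a monomial $\prod_b(\mathcal Q_0^{(b)})^{k_b}$ by $\mathcal Q_{-1}^{(c)}$ multiplies it by $t^{\sum_b \Lambda_{cb}k_b}$, by \eqref{commutation}. For the two monomials of $\mathcal N^{(a)}=(\mathcal Q_0^{(a)})^2-\prod_{b\neq a}(\mathcal Q_0^{(b)})^{-C_{ab}}$ these exponents are $2\Lambda_{ca}$ and $2\Lambda_{ca}-|C|\delta_{ca}$ (use $\Lambda C=|C|\,\mathrm{Id}$, $C_{aa}=2$ and the symmetry of $C$), so they agree precisely when $c\neq a$. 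But negative powers of $\mathcal Q_1^{(a)}$ force you to commute $(\mathcal N^{(a)})^{-1}$ repeatedly past $\mathcal Q_{-1}^{(a)}$ \emph{itself}, i.e.\ the case $c=a$, and each passage twists the relative coefficient of the two monomials by $t^{\pm|C|}=q^{\mp1}$ --- one full unit of the grading, not a harmless overall unit. Consequently what your matching argument actually proves is that the offending coefficients are divisible by products of \emph{twisted} factors $(\mathcal Q_0^{(a)})^2-t^{j|C|}\prod_{b\neq a}(\mathcal Q_0^{(b)})^{-C_{ab}}$ with $j\neq 0$; these evaluate at $\mathcal Q_0^{(b)}=1$ to $1-t^{j|C|}\neq 0$, so the final implication ``divisible by something vanishing at $1$, hence vanishing at $1$'' evaporates.

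That this is fatal and not a repairable technicality is visible already for $\mathfrak g=\mathfrak{sl}_2$. Your argument never uses any asymmetry between forward and backward mutations, so it would apply verbatim to the cluster variable $\mathcal Q_{-1}$. With the paper's conventions \eqref{quantumQ}--\eqref{commutation}, namely $t\mathcal Q_1\mathcal Q_{-1}=\mathcal Q_0^2-1$ and $\mathcal Q_0\mathcal Q_1=t\mathcal Q_1\mathcal Q_0$, normal ordering as in \eqref{normal} gives $\mathcal Q_{-1}=t^{-1}\mathcal Q_1^{-1}(\mathcal Q_0^2-1)=t\,\mathcal Q_0^2\mathcal Q_1^{-1}-t^{-1}\mathcal Q_1^{-1}$, whose evaluation at $\mathcal Q_0=1$ is $(t-t^{-1})\mathcal Q_1^{-1}$: not a polynomial and not zero, even though classically $Q_{-1}\mapsto 0$ in accordance with Theorem \ref{poly}. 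Note that the coefficient $t\mathcal Q_0^2-t^{-1}=t^{-1}(t^2\mathcal Q_0^2-1)$ \emph{is} divisible by the twisted factor $t^2\mathcal Q_0^2-1$, exactly as your divisibility step predicts --- which shows that your divisibility conclusion is both correct and insufficient. The actual content of the lemma in \cite{DFKfusion} is therefore the $t$-power bookkeeping you suppress: one must track precisely which twisted factors multiply which coefficients, and the statement itself is sensitive to the relative $t$-normalization of the two terms of the quantum exchange relation (it holds only for the correctly normalized system, and only for the cluster variables actually entering \eqref{mprod}), a distinction that is invisible both in the classical limit and in your reduction ``to the commutative case.''
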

Thus, we have the graded version of the $M=N$ identity:
\begin{theorem}[\cite{DFKfusion}] 
In the summation in Equation \eqref{pf}, terms with $p_{a,i}<0$ do not
contribute to the sum in the $q$-graded version of the identity. That
is, $M_{\boldsymbol\nu,\lambda}(q)=N_{\boldsymbol\nu,\lambda}(q)$.
\end{theorem}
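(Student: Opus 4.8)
The plan is to run the argument that \cite{DFK} used for the ungraded identity (Theorem \ref{nism}) at the graded level, with the classical $Q$-system replaced by its quantum version \eqref{quantumQ} and classical polynomiality (Theorem \ref{poly}) replaced by the quantum polynomiality Lemma stated just above. The starting point is the constant term identity (Theorem \ref{constantterm}), which already writes the $M$-sum as the constant term $\langle M_{\boldsymbol\nu,\lambda}\rangle$ of the product \eqref{mprod} of quantum $Q$-system cluster variables after normal ordering and evaluation at $\mathcal Q_0^{(a)}=1$. The goal is to compute that same constant term by a direct Laurent expansion, to see that this expansion organizes itself into the $N$-sum over \emph{all} multipartitions $\boldsymbol\mu$, and then to use polynomiality to discard the terms with some $p_{a,i}<0$, so that the $N$-sum collapses onto the $M$-sum.

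First I would expand \eqref{mprod} into the seed $\{\mathcal Q_0^{(a)},\mathcal Q_1^{(a)}\}$ by iterating the recursion \eqref{quantumQ} and normal ordering with the commutation relations \eqref{commutation}. The factors $\prod_i (\mathcal Q_i^{(a)})^{\nu_i^{(a)}-\nu_{i+1}^{(a)}}$ become polynomials in $\mathcal Q_1^{(a)}$, while the boundary factors $(\mathcal Q_k^{(a)}(\mathcal Q_{k+1}^{(a)})^{-1})^{\langle\omega_a,\lambda\rangle+1}$ are expanded as formal Laurent series in the $\mathcal Q_1^{(a)}$. Grouping the resulting monomials by the number of recursion steps used in each color produces the summation over $\boldsymbol\mu$; the $q$-commutators in \eqref{commutation} are precisely what promote the bare multiplicities to $q$-binomial coefficients, with $q=t^{-|C|}$ and the overall power $q^{h(\boldsymbol\nu,\lambda)}$ bookkeeping the $t$-factors picked up during normal ordering. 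Performed without any positivity constraint, this reproduces the $N$-sum $N_{\boldsymbol\nu,\lambda}(q)$, in which $\boldsymbol\mu$ ranges freely and the $q$-binomials are taken in their continued form of Remark \ref{nsum}.

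The decisive step is to invoke the quantum polynomiality Lemma: after normal ordering and setting $\mathcal Q_0^{(a)}=1$, every cluster variable is a genuine polynomial in $\{\mathcal Q_1^{(a)}\}_a$ with coefficients in $\mathbb Z[t,t^{-1}]$, so that no negative powers of the $\mathcal Q_1^{(a)}$ survive. Under the correspondence of the previous paragraph, the terms of the $\boldsymbol\mu$-sum with some $p_{a,i}<0$ are exactly those forced to carry a negative power of some $\mathcal Q_1^{(a)}$, equivalently those demanding the continued rather than the polynomial form of a $q$-binomial. By the divisibility mechanism in the proof of Theorem \ref{poly}, now run with the quantum Laurent property of \cite{BZ} so that the cancellation is coefficient-by-coefficient in $t$, these contributions vanish identically. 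Only the terms with all $p_{a,i}\geq 0$ remain, the continued $q$-binomials reduce to their polynomial values, and $N_{\boldsymbol\nu,\lambda}(q)$ collapses onto $M_{\boldsymbol\nu,\lambda}(q)$.

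I expect the main obstacle to be the non-commutative bookkeeping. In the classical case one need only track the total power of $Q_1$, whereas here one must normal-order products of the $\mathcal Q_i^{(a)}$ via \eqref{commutation} and verify that the accumulated $t$-powers assemble exactly into the $q$-binomial coefficients and the normalization $q^{h(\boldsymbol\nu,\lambda)}$ under the identification $q=t^{-|C|}$. The genuinely delicate point is upgrading the divisibility argument of Theorem \ref{poly} to the graded setting: one must show that the quantum Laurent property forces the numerator polynomials in the $\mathcal Q_1^{(a)}$ to be divisible, with their full $t$-dependence, by the quantum exchange numerators, so that the $p_{a,i}<0$ contributions cancel in $\mathbb Z[t,t^{-1}]$ and not merely after specializing $t=1$. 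Once this graded divisibility is in place, the identity $M_{\boldsymbol\nu,\lambda}(q)=N_{\boldsymbol\nu,\lambda}(q)$ follows exactly as in the ungraded argument.
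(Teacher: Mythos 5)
Your proposal takes essentially the same route as the paper: the paper obtains the graded $M=N$ identity precisely by combining the constant term identity (Theorem \ref{constantterm}) with the quantum polynomiality lemma deduced from the quantum Laurent property of \cite{BZ}, which are exactly the two ingredients you deploy. The extra detail you supply --- that the unrestricted normal-ordered Laurent expansion of \eqref{mprod} organizes into the $N$-sum with continued $q$-binomials, and that the graded divisibility argument kills the terms with $p_{a,i}<0$ --- is the mechanism the paper itself leaves to \cite{DFKfusion}.
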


\section{Difference equations}
So far, we have said nothing about the integrability of the $Q$-system
and its $Q$-deformed version. But in fact, this is a two-step
recursion relation of rank $r$, and it has $r$ integrals of the
discrete evolution (which are in involution with each other with
respect to the Poisson structure of the cluster algebra, or the
commutation relations of the quantum cluster algebra). In
type $A$ for example, the solutions
$Q_{k}^{(a)}$ satisfy linear recursion relations with $r+2$ terms, and
with coefficients which are integrals of the motion (or constants). 

These integrals of the motion can be used to find
differential/difference equations satisfied by generating functions
for partition functions (characters of graded tensor products). This
derivation is analogous to the construction of the Whittaker
functions, which are solutions of the quantum Toda equations in the
case of classical Lie groups, where the integrals of the motion are
the Casimir elements of the algebra \cite{Kostant}. More recently
there has been a certain interest in the so-called Gaiotto vector,
which is the analog of the Whittaker vector for Virasoro algebras, or
some degenerate version thereof.

Since certain stabilized limits of the graded tensor products tend to
various Virasoro modules or integrable affine algebra modules, it is
useful to first write these equations for the finite tensor
product. The result are Toda-like equations satisfied by the
generating function (the relation of fermionic character formulas and
Toda equations was noted in, e.g. \cite{FeiginToda}). This can be used
to derive difference equations satisfied by the stabilized limits of
the graded tensor products, and even solve them in special cases. One
can obtain the character formulas of Feigin and Stoyanovskii, or of
spinon type, by analyzing these difference equations.

The analog of the Whittaker function in the case of the graded tensor
product is the generating function \cite{DFK14}
$$
G(q;\mathbf z,\mathbf y) = 
\sum_{\boldsymbol\nu ,\lambda} q^{f_1(\boldsymbol \nu)}
{\rm ch_{\mathbf z}}V(\lambda) 
M_{\boldsymbol \nu,\lambda}(q) \prod_{a,i}
(y_{a,i})^{\nu^{(a)}_i-\nu^{(a)}_{i+1}} ,
$$
with $f_1(\boldsymbol\nu)= \frac12 \sum_{a,b,i}
\nu_i^{(a)}C_{a,b}^{-1}\nu_i^{(b)} + \sum_{a,b} C^{-1}_{ab}\nu_1^{(b)}$.
Using the factorization formula of Section \ref{sec-quantumQ}, this has a
very simple form:
$$
G(q;\mathbf z,\mathbf y)=\langle \prod_{a=1}^r\Q_1^{(a)}(\Q_0^{(a)})^{-1}
\prod_{a=1}^r\prod_{j\geq 1}(1-y_j^{(a)}Q_j^{(a)})^{-1} \tau(\mathbf z)\rangle
$$
where 
$$
\tau(\mathbf z)=\lim_{k\to\infty}\sum_{\lambda}q^{f_2(\lambda)}
{\rm ch}_{\mathbf z}V(\lambda) \prod_{a=1}^r (Q_k^{(a)}(Q_{k+1}^{(a)})^{-1})^{\langle
  \alpha_a,\lambda\rangle + 1} ,
$$
where $f_2(\lambda)=-\frac12 \sum_a C^{-1}_{aa}\ell_a$.

We claim that $\tau(\mathbf z)$ plays the role of the Whittaker
vector, with the role of the Casimir elements played by the discrete integrals
of motion of the $Q$-system. These act by scalars on this function,
whereas they act as $q$-difference operators on the product
of $\Q$'s to the left. This is the origin of the difference equations
satisfied by the partition functions.

\section{Summary}
We reviewed here the role played by the fermionic formulas for the
 characters of graded tensor products of current algebra modules,
 their close connection with discrete integrable equations called
 $Q$-systems, and their $q$-deformations. In the process, we
 used the formulation of these systems in terms of (quantum) cluster
 algebras, and found that the grading coming from the affine
 algebra action on the tensor product can be identified with the
 grading coming from the $q$-deformation of the cluster algebra, hence
 from the natural Poisson structure satisfied by the cluster algebra
 variables. The resulting graded tensor products, in the stabilized,
 semi-infinite limit, give a constrution of affine algebra or Virasoro
 modules. The integrability of the quantum $Q$-system is closely
 connected with difference equations satisfied by the characters of
 these modules.

\vskip.2in

%\bibliography{refs}
%\bibliographystyle{plain}

\section{References}
%\frenchspacing

\end{document}